\newtheorem{theorem}{Theorem}
\begin{document}

\title[Article Title]{A Dynamic Agent Based Model of the Real
Economy with Monopolistic Competition,
Perfect Product Differentiation,
Heterogeneous Agents, Increasing Returns
to Scale and Trade in Disequilibrium}


\author*[1]{\fnm{Subhamon} \sur{Supantha}}\email{subhamonsupantha@gmail.com}

\author[2]{\fnm{Naresh Kumar} \sur{Sharma}}\email{nksharma@uohyd.ac.in}

\affil*[1]{\orgdiv{Department of Computer Science}, \orgname{Chennai Mathematical Institute}, \orgaddress{\street{ H1, Sipcot IT Park, Kelambakkam, Siruseri}, \city{Chennai}, \postcode{603103}, \state{Tamil Nadu}, \country{India}}}

\affil[2]{\orgdiv{School of Economics}, \orgname{University of Hyderabad}, \orgaddress{\street{India Post Lingampally CR Rao Road}, \city{Hyderabad}, \postcode{500046}, \state{Telangana}, \country{India}}}


\abstract{We have used agent-based modeling as our numerical method to artificially simulate a dynamic real economy where agents are rational maximizers of an objective function of Cobb-Douglas type. The economy is characterised by heterogeneous agents, acting out of local or imperfect information, monopolistic competition, perfect product differentiation, allowance for increasing returns to scale technology and trade in disequilibrium. An algorithm for economic activity in each period is devised and a general purpose open source agent-based model is developed which allows for counterfactual inquiries, testing out treatments, analysing causality of various economic processes, outcomes and studying emergent properties. 10,000 simulations, with 10 firms and 80 consumers are run with varying parameters and the results show that from only a few initial conditions the economy reaches equilibrium while in most of the other cases it remains in perpetual disequilibrium. It also shows that from a few initial conditions the economy reaches a disaster where all the consumer wealth falls to zero or only a single producer remains. Furthermore, from some initial conditions, an ideal economy with high wage rate, high consumer utility and no unemployment is also reached. It was also observed that starting from an equal endowment of wealth in consumers and in producers, inequality emerged in the economy. In majority of the cases most of the firms(6-7) shut down because they were not profitable enough and only a few firms remained. Our results highlight that all these varying outcomes are possible for a decentralized market economy with rational optimizing agents.}

\keywords{Agent Based Modeling, Artificial Economy, Network Economics, Dynamic Economic Model, Disequilibrium Economics}



\maketitle

\section{Introduction}\label{sec1}

An economy is a complex system with heterogeneous agents acting for their own personal objective. Firms are maximising profit and consumers are maximising utility. Dependence of each agent’s action on that of others and feedback mechanism through price signalling give rise to complexity.

When production functions with increasing returns to scale are introduced, it ceases to be convex and the global maximum is no longer necessarily a stationary point and according to Arrow-Debreu theorem, production with such functions does not necessitate the existence of a market equilibrium. \cite{arrowdebreu} In addition, when agents don’t have global information about market demand curves, the only way to reach market clearing prices is through classical \textit{tatonnement}. However, Sonnenschein-Mantel-Debreu theorem rules out the necessity of such a process to reach an equilibrium, even if it exists. \cite{ackerman} Hence, in this model trade has been allowed for in disequilibrium. 

The aggregation of capital has been a source of controversy in Capital Theory. \cite{Joan} The approach in this model has been to treat each non-labour input as a separate parameter in the production function of a firm in order to avoid aggregation. Also, each firm is assumed to produce a unique, perfectly differentiated good. From the set of all available products, each agent consumes a subset of it and all other goods which are not in the subset, do not affect their objective function. All objective functions are taken to be of Cobb-Douglas type, however the coefficient of elasticities, the inputs and the number of inputs are all allowed to vary from agent to agent. 

With all the products being unique and perfectly differentiated, all the producers are monopolies. However as substitutability is allowed for in a cobb-douglas production function, demanders respond to price changes and hence there exists monopolistic competition. An important point to be mentioned however is that though products have been assumed to be heterogeneous, labour has been considered to be homogeneous. Also, as already mentioned, agents do not have global information and hence are acting out of local knowledge. With the dynamic nature of the model, agents have to plan a priori, for example, target an output before actually purchasing the inputs necessary to do so. In all such cases, owing to lack of perfect information, agents are working out of naive expectations. That is to say, for all variables whose value at time $t+1$ is not deterministically known to an agent at time $t$, their best guess for the value at $t+1$ is the value at $t$. \cite{Expectations}

Analytical closed form solutions for the time paths of production, consumption, prices and the various economic variables for all the agents, with such a high amount of complexity, may not exist or even if it does would be extremely tedious to find. A numerical approach with Agent-Based Modeling(ABM) has thus been taken. In the ABM approach, artificial agents are programmed to act according to the rules elaborated in the next section which gives rise to an artificial economy. The economy thus constructed is sensitively dependent on initial conditions which are externally provided by the user. The Artificial Economy(ies) thus simulated can be observed and various macroeconomic variables of interest can be studied. Counterfactuals can be tested by creating ceteris paribus conditions against any treatment and causal analysis can be performed. \cite{tesfatsion2006agent} \cite{arthur2006out}

A general-purpose ABM has been created for the real economy and 10,000 simulations with varying initial conditions were run on the model thus created. The results from these simulations show that only from some small pockets of initial conditions the model eventually reaches equilibrium. A few typical cases have been discussed in more detail and some general observations are laid down in the results section.

The modelling has been done in Python with the MESA module being used extensively. \cite{mesa}

\section{The Model}
\subsection{The Economy as a Network}
The Economy is a connected directed graph $G$ of agents as nodes and an edge $(a,b)$, $a,b \in V(G)$ means that $a$ sells to $b$. For any particular node $y \in V(G)$ it’s providers are all those producers who sell to $y$, defined formally as the set $\rho(y)= \{x \in V(G) : (x,y) \in X(G)\}$. It must be noted that $V(G)$ and $X(G)$ refer to the vertex set and edge set respectively of a graph G. The out-neighbourhood of a node $y$ is the set of all nodes $w$ such that $(y,w) \in X(G)$.  

\subsection{The General Structure}
Agents are classified into consumers and producers. Producers produce a unique differentiated good by using other such goods from its providers and labour from consumers as its input. The objective of a producer is to maximise their profit function. 

The consumers buy goods from their providers. Such goods are positively valued by the consumers and so is leisure and income for the next period. The consumers are maximising their utility function.

At time $t = 0$, all agents are endowed with equal wealth $W_0$, all producers have inventory stock of their goods exactly equal to the demand they will face at $t=1$. All initial commodity prices, wage rate, production function and utility functions are set exogenously. In addition, a matrix $S$ is exogenously given where each element $s_{ij}$ denotes the $j^{th}$ consumer’s share of $i^{th}$ firm, $0\leq s_{ij} \leq 1$ and $\sum_{j=1}^{n_c} s_{ij} = 1$, where($n_c$ is the number of consumers in the economy). At the end of each period the firm $i$ keeps a fraction of its profit and adds it to its wealth and redistributes the rest to the consumers. The fraction thus described will be called profit reinvestment ratio(PRR), $0 \leq PRR \leq 1$. In the model constructed, PRR is set initially and is the same for all producers. Hence the profit income of an agent $j$ from a firm $i$ at a time $t$ will be given as $V_{jt} = s_{ij}(1-PRR)\pi_{it}$. Apart from the variables thus mentioned, the adjustment factors of each agent and that of the economy is also set (to be explained). All these parameters which are exogenously assigned at time $t=0$, will be called initial conditions. 

At each time period the agents act as follows and in the order in which they are mentioned. 

\begin{algorithm}
\caption{Agent action procedure for each period}\label{algo1}
\begin{algorithmic}[1]

    \State The agents gather information about the prices of the products of their providers. 
    \State They calculate their demand for goods(either as input for production or for consumption) and labour(applicable only for producers), leisure and income for the next period. . 
    \State All the agents send their demands to their respective providers. 
    \State The producers sell their demanders goods from their inventory.
    \State Labour supply and labour demand of the whole demand is calculated and is bought and sold at an aggregate level, with a wage rate being the same across the economy. 
    \State After step 5, all trade for that period has been completed. The agents calculate their costs and the producers produce the goods for the next periods and augment that to the inventory. If the producer's inventory falls to zero, then that producer is marked for removal. 
    \State Each individual producer calculates the excess demand and changes their commodity price in the direction of the excess demand. 
    \State The aggregate excess labour demand is calculated and the wage rate is changed in the direction of the excess demand. 
    \State Each individual firm, if it has made profit in that period, redistributes $(1-PRR)$ amount of it among its shareholders. If the firm has been marked for removal, its PRR is set to 0. 
    \State Each consumer calculates their own utility and income earned in this period and adds it to their stock of wealth. 
    \State All demanders of a firm marked for removal have a choice to either remove the shut-down firm from their provider set or replace it with one of the available producers not already in their provider set. This decision is assumed to be stochastic and each agent has a probability of 0.5 of choosing either of the two outcomes. This choice is however only there when there exists possible candidates for the outgoing firm to be replaced by. In either case, the output elasticities of the producers are renormalized and multiplied with the degree of  homogeneity, which remains unchanged throughout.  If there is no producer left in the economy, the model program halts.
    \State If any consumer is left with no provider, their utility function parameters and provider set is regenerated.
    \State The shut-down firms are removed from the economy and start the next period. 

\end{algorithmic}
\end{algorithm}

\subsection{The Producer and the Production Function}
The production function of a producer $y$ is given as follows.
\begin{equation}
    Q_{yt} = A_{y} L_{yt}^\beta \prod_{i=1}^\eta Q_{x_it}^{\alpha_i} 
\end{equation}

Where $Q_{yt}$ is the quantity produced by the producer $y$ at time $t$. $A_y$ is the technological constant of $y$, $L_{yt}$ is the labour used in time $t$ for production. $\beta$ is the output elasticity of labour, $\eta$ is the number of providers of $y$ and $Q_{x_it}$ is the quantity of product of the provider $x_i \in \rho(y)$  at time $t$ used in the production. $\alpha_i$ is the output elasticity of the good produced by $x_i$ and $\beta$ is the output elasticity of labour. $ \forall i, 0 < \alpha_i, \beta < 1$, or in other words, all inputs have diminishing marginal product. The degree of  homogeneity is a scalar $\kappa$ such that for any homogeneous production function, $f$, $\kappa = \frac{k'}{k}$ when $ \forall k > 1, f (k \textbf{x}) = {k'} f(\textbf{x})$.

In a Cobb Douglas production function, the degree of homogeneity is the sum of all  the output elasticity coefficients. Hence in our case it is $\kappa = \sum_{i=1}^\eta \alpha_i + \beta $. A production function has increasing, decreasing or constant returns to scale if $\kappa >1 , \kappa <1  $ or $\kappa =1$ respectively. 

$Q_{x_it}$ is the minimum of the quantity supplied by the provider $x_i$ to $y$ and the quantity demanded by $y$ from $x_i$ at time $t$. If $D_t(y)$ is defined as the out-neighbourhood of $S$ at any time $t$, the market demand of any producer $y$ is the sum of all the individual demands from its out-neighbours, $d_{it} \in D_t(y)$.   If  the individual demand of an agent $d_{it}$ for the good produced by $y$ is given as $q_{d_iyt}$ and the market demand for $y$ at time $t$ is given as $Q_{yt}^d$, then 
\begin{equation}
    Q_{yt}^d = \sum_{d_{it} \in D_t(y)} q_{d_iyt} 
\end{equation}

The quantity supplied by any firm $y$ at time $t$, denoted by $Q_{yt}^s$ is given by 
\begin{equation}
    Q_{yt}^s = I_{yt} 
\end{equation}
Where $I_{yt}$ is the inventory stock of the product of $y$ at time $t$. It is updated by the relation 
\begin{equation}
    I_{y(t+1)}  =Q_{y(t+1)}^s = Q_{yt}^s -min(\{Q_{yt}^s, Q_{yt}^d\}) + Q_{yt}
\end{equation}

At every time period, for any particular producer $y$ if $Q_{yt}^d > Q_{yt}^s$, then every individual agent $d_i$ who demanded from $y$ gets $(\frac{Q_{yt}^s}{Q_{yt}^d})q_{d_iyt}$ of the good.  

The cost function of any agent $y$ is the linear combination of the quantity purchased with their respective prices as weights. 
\begin{equation}
    C_{yt} = \omega_t L_{yt} + \sum_{i=1}^{\eta}P_{x_it}q_{x_it}
\end{equation}
Where $x_i \in \rho(y)$, $\omega$ is the wage rate, $L$ is the amount of labour purchased by $y$ at $t$. 

The profit earned at period $t$ is given as 
\begin{equation}
    \pi_{yt} = R_{yt} - C_{yt}
\end{equation}
Where $R_{yt}$ is the revenue earned at period $t$.  
\begin{equation}
    R_{yt} =  P_t  min(\{Q_{yt}^s,Q_{yt}^d\}) \newline
    \Rightarrow R_{yt} =  P_t  min(\{I_{yt},Q_{yt}^d\})
\end{equation}

The decision of how much to produce and how much of inputs to demand requires some attention. It is obtained from maximising the profit function given as. 

\begin{equation}
    \pi(Q_{yt}) = R(Q_{yt}) - C(Q_{yt})    
\end{equation}
Notice that even though the good is produced at time $t$, it might get sold at any time in the future, moreover the price in the future is not known a priori to $y$. It is assumed that $y$ believes that it the whole amount, $Q_{yt}$ that will be sold at time $t+1$ and also, exhibiting naive expectations, $y$ believes that $P_{y(t+1)}=P_{yt}$. It also assumes that its demand for inputs will be met fully in the current time period. The demand for input from provider $x_i$ by $y$ at time $t$ is given as $q_{yx_it}$. Then,
\begin{equation}
  \pi(q_{yx_1t},q_{yx_2t}, \dots , q_{yx_\eta t},L_{yt}) = P_{yt}A_y L_{yt}^\beta \prod_{x_i \in \rho(y)}q_{yx_it}^{\alpha_i} - \sum_{x_i \in \rho(y)}P_{x_it}q_{yx_it} - \omega L_{yt}  
\end{equation}

Let $W_{yt}$ be the wealth of the producer $y$ at $t$. Then the budget constraint is given as
\begin{equation}
  C(Q_{yt}) = \sum_{x_i \in \rho(y)}P_{x_it}q_{yx_it} + \omega L_{yt} \leq W_{yt}  
\end{equation}
$W_{yt}$ is updated by the rule
\begin{equation}
    W_{y(t+1)} = W_{yt} + \pi_{yt}
\end{equation}

\subsection{The Consumer and the Utility Function}
For any particular consumer $y$, having utility elasticities of consumption as $\alpha_i$ for the particular good produced by the provider $x_i$, $\beta_i$ is the utility elasticity of leisure and $\gamma$ is the utility elasticity of income earned at time $t$. $q_{yx_it}$ is the quantity demanded by $y$ at time $t$ from the provider $x_i$. $V$ is the profit income of the consumer $y$. $T$ is the total available time for $y$ at any time period, to be divided between labour and leisure. Hence if $L_{yt}^s$ is the labour offered by $y$ at time $t$, then leisure $= T - L_{yt}^s $ if all the labour its offering is used up in some production. 

Wealth of $y$ at any time $t$ is given as $W_{yt}$ and is updated by the rule

\begin{equation}
    W_{y(t+1)}= W_{yt} + \omega_t L_{yt} + V_{yt}
\end{equation}  

Note that $\omega L_{yt} + V_{yt}$ is the income earned by $y$ at $t$ where . It must be mentioned again that as was in the case of producers, consumers have naive expectations about their profit income and expect all their demands to be completely fulfilled. 

\begin{equation}
    U(q_{yx_1t}, q_{yx_2t}, \dots , q_{yx_\eta t}, L_{yt}^s) = (\omega_t L_{yt}^s + V_{yt})^\beta (T - L_{yt}^s)^\gamma \prod_{x_i \in \rho(y)}q_{yx_it}^{\alpha_i}
\end{equation}

Note that $V_{yt}$ is unknown at the stage where the consumer is finding the optimal bundle, hence his best guess for $V_{yt} = V_{y(t-1)}$. The optimal bundle is obtained by maximising the above function subject to the budget constraint

\begin{equation}
  \sum_{x_i \in \rho(y)}P_{x_it}q_{yx_it} \leq W_{yt}  
\end{equation}

\subsection{The Optimization Problem}
The optimization problem for each agent is analytically solved and the solution is fed into the model as a “black-box” for the computer. In other words, the computer is not doing the optimization through some numerically constrained global optimization algorithm but rather the solution is already provided as a hard-coded rule. 

The aim of the following paragraphs are to arrive at these optimal maximizers and also to conclude upon the nature of a critical point without checking the signs of the principal leading minors of the hessian matrix explicitly. These reduce the computational load of the program allowing us to have a more efficient algorithm.

 Before the solution is presented the following proposition needs to be proved. 
\begin{theorem}
     The function 
    \[f:\mathbb{R}^{n+ } \mapsto \mathbb{R}^+ , \newline
    f(\textbf{x}) = PA\prod_{i=1}^{n} x_i^{\alpha_i} \] 
    is explicitly quasiconcave. P,A are positive constants and $0\leq \alpha_i \leq 1$.
\end{theorem}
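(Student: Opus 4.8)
The plan is to reduce everything to the concavity of the natural logarithm, i.e. to the fact that a Cobb--Douglas form is log-concave. I take $\mathbb{R}^{n+}$ to be the strictly positive orthant $\{\mathbf{x}:x_i>0\ \text{for all}\ i\}$, which is what the stated codomain $\mathbb{R}^+$ forces (on the boundary $f$ would vanish), and on which $f$ is strictly positive, so that $g:=\ln f$ is well defined:
\begin{equation}
    g(\mathbf{x}) = \ln(PA) + \sum_{i=1}^{n}\alpha_i\ln x_i .
\end{equation}
Each coordinate map $x_i\mapsto\ln x_i$ is concave on $(0,\infty)$ (second derivative $-1/x_i^2<0$), each $\alpha_i\ge 0$, and the domain is convex, so $g$ is a nonnegative combination of concave functions plus a constant, hence concave. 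Note that only $\alpha_i\ge 0$ is used here; the bound $\alpha_i\le 1$ plays no role in quasiconcavity.

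Next I would transfer this to $f=\exp\circ g$ using that $\exp$ is strictly increasing and continuous. For $\mathbf{x},\mathbf{y}$ in the domain and $\lambda\in(0,1)$, concavity of $g$ gives
\begin{equation}
    g(\lambda\mathbf{x}+(1-\lambda)\mathbf{y}) \ge \lambda g(\mathbf{x})+(1-\lambda)g(\mathbf{y}) \ge \min\{g(\mathbf{x}),g(\mathbf{y})\};
\end{equation}
applying $\exp$ to both ends yields $f(\lambda\mathbf{x}+(1-\lambda)\mathbf{y})\ge\min\{f(\mathbf{x}),f(\mathbf{y})\}$, which is quasiconcavity. For the ``explicit'' (semistrict) clause I would argue directly: if $f(\mathbf{x})\ne f(\mathbf{y})$, say $f(\mathbf{x})>f(\mathbf{y})$, then $g(\mathbf{x})>g(\mathbf{y})$, so $\lambda g(\mathbf{x})+(1-\lambda)g(\mathbf{y})>g(\mathbf{y})$; the displayed inequality then gives $g(\lambda\mathbf{x}+(1-\lambda)\mathbf{y})>g(\mathbf{y})$, and exponentiating, $f(\lambda\mathbf{x}+(1-\lambda)\mathbf{y})>f(\mathbf{y})=\min\{f(\mathbf{x}),f(\mathbf{y})\}$ for all $\lambda\in(0,1)$. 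Since these are precisely the defining conditions, I would state the definition of explicit quasiconcavity just before the proof and verify it line by line as above, rather than invoke a black-box ``monotone transformation preserves generalized concavity'' theorem. The degenerate case $\alpha_1=\dots=\alpha_n=0$ (where $f\equiv PA$) is disposed of separately: quasiconcavity is immediate and the hypothesis $f(\mathbf{x})\ne f(\mathbf{y})$ is never met, so the semistrict implication holds vacuously.

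An equivalent packaging, if preferred, is via level sets: for $c>0$ one has $\{\mathbf{x}:f(\mathbf{x})\ge c\}=\{\mathbf{x}:g(\mathbf{x})\ge\ln c\}$, a superlevel set of a concave function and hence convex, while for $c\le 0$ the set is the whole (convex) domain. I do not expect any genuine obstacle here --- the content is just the standard log-concavity of Cobb--Douglas. The only points needing care are (a) fixing the domain as the strictly positive orthant so that $\ln$ is defined and $f>0$ throughout, and (b) pinning down which of the several non-standardized notions of ``explicit quasiconcavity'' is intended, which is why I would make that definition explicit at the outset and then match it against the two inequalities proved above.
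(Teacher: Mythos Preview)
Your argument is correct and complete on the open orthant: log-concavity of the Cobb--Douglas form plus strict monotonicity of $\exp$ gives quasiconcavity, and the strict-inequality clause follows exactly as you wrote. You are also right that only $\alpha_i\ge 0$ is needed.

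The paper proceeds quite differently. It takes quasiconcavity of $f$ as already known and proves only the \emph{explicit} upgrade, via the first-order (gradient) characterization: assuming $f(\mathbf{y})>f(\mathbf{x})$ but $\nabla f(\mathbf{x})\cdot(\mathbf{y}-\mathbf{x})=0$, it shows $\nabla f(\mathbf{x})\cdot\nabla f(\mathbf{y})>0$, then slides from $\mathbf{y}$ along $-\nabla f(\mathbf{y})$ to a point $\mathbf{v}$ on the level set of $\mathbf{x}$ and obtains $\nabla f(\mathbf{x})\cdot(\mathbf{v}-\mathbf{x})<0$, contradicting quasiconcavity. Your route is more elementary and self-contained (you actually \emph{prove} quasiconcavity rather than assume it, and you never differentiate); the paper's route is tied to the differential criterion that is later invoked when classifying critical points of the constrained problem.

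One substantive difference in scope: the paper works on the closed nonnegative orthant and handles the boundary as a separate Case~I ($f(\mathbf{x})=0<f(\mathbf{y})$ forces every coordinate of the convex combination to be strictly positive, hence $f>0=\min$). You instead restrict to the open orthant by reading the codomain as $\mathbb{R}^+$. If the intended domain is the closed orthant, your log argument does not apply at boundary points, but the paper's one-line Case~I patches this immediately; you should add that sentence rather than rely on the codomain reading.
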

\begin{proof}[Proof (by contradiction)]
    $\textbf{Case I }: f(\textbf{y}) >0, f(\textbf{x})=0 \\ \newline
    f(\textbf{x})=0 \Leftrightarrow \exists i, x_i=0\\
    \text{And} f(\textbf{y})> 0 \Leftrightarrow \forall i, x_i>0 \\ \newline      
    \text{Consider } w \doteq \theta \textbf{x} + (1-\theta)\textbf{y}, \theta \in (0,1] \\
    \Leftrightarrow \forall i, w_i = \theta x_i + (1-\theta)y_i > 0 \\
    \Leftrightarrow f(\textbf{w})>0 \\
    \Rightarrow f(\textbf{w})>f(\textbf{x}) \\
    \newline
    \textbf{Case II}: f(\textbf{y}) >0, f(\textbf{x})>0 \\ \newline
    \text{Let us assume that } f(\textbf{y}) > f(\textbf{x})  \text{ and } \nabla f(\textbf{x})\cdot (\textbf{y}- \textbf{x})= 0 \\
    \text{Note that f is quasiconcave}. \\ \newline
    \text{Consider} \nabla f(\textbf{y}) \cdot \nabla f(\textbf{x}) \\ 
    = f(\textbf{y}) f(\textbf{x}) 
    \begin{bmatrix} \vdots \\ 
    \frac{\alpha_i}{y_i} \\
    \vdots \\
    \end{bmatrix} 
    \cdot 
    \begin{bmatrix}\vdots \\ 
    \frac{\alpha_i}{x_i} \\
    \vdots \end{bmatrix} 
    = f(\textbf{y}) f(\textbf{x}) \sum_{i=1}^n \frac{\alpha_i^2}{x_i y_i} > 0 \\
    \newline \newline
    \text{Now let} \textbf{v} \in \mathbb{R}^{n+}
    \textbf{v} \doteq  \textbf{y} - \lambda \nabla f(\textbf{y}) \\
    \text{such that} f(\textbf{v})=f(\textbf{x}), \lambda > 0  \\ \newline
    \text{Consider} \nabla f(\textbf{x}) \cdot ( \textbf{v} - \textbf{x})
    =\nabla f(\textbf{x}) \cdot ( \textbf{v} - \textbf{y}) + \nabla f(\textbf{x}) \cdot ( \textbf{y} - \textbf{x}) \\
    = - \lambda \nabla f(\textbf{y}) \cdot \nabla f(\textbf{x}) <0 \\ \newline
    \text{However by quasiconcavity of f}, \nabla f(\textbf{x}) \cdot ( \textbf{v} - \textbf{x}) \geq 0 \\ 
    \text{Hence, a contradiction}. $
\end{proof}
   
It is known that when $\sum_{i=1}^{n}\alpha_i \leq 1$, $f$ is a continuous differentiable concave function. Hence, if it has a stationary point, it is the global maximum. The global maximum for any continuous differentiable function must be either a stationary point or a point in its boundary. So if $f$ doesn’t have a stationary point in the constraint set, then the global maximum must lie on the boundary. 
  
For $\sum_{i=1}^{n}\alpha_i > 1$ however, a stationary point is not necessarily even a local maximum. So the global maximum must lie at the boundary. 

We will first work with the case $\sum_{i=1}^{n}\alpha_i \leq 1$. Consider the function $f:\mathbb{R}^{n+ } \mapsto \mathbb{R} ,f(\textbf{x}) = PA\prod_{i=1}^{n} x_i^{\alpha_i} - \sum_{i=1}^n P_i x_i, P_i > 0$. This function is also concave as it is the linear combination of two concave functions. The solution to the unconstrained maximisation problem,  

\begin{equation}
    \max_{x_1,x_2,\dots,x_i} PA\prod_{i=1}^{n} x_i^{\alpha_i} - \sum_{i=1}^n P_i x_i
\end{equation}
is given by the matrix equation

\begin{equation}
\begin{bmatrix}(\alpha_1 -1) & \dots & \alpha_i & \dots & \alpha_n \\
\vdots & \ddots & \vdots & \vdots & \vdots \\
\alpha_1 & \dots & (\alpha_i-1) & \dots & \alpha_n \\
\vdots & \vdots & \vdots & \ddots & \vdots \\
\alpha_1 & \dots & \alpha_i & \dots & (\alpha_n -1) \end{bmatrix} \begin{bmatrix}log(x_1) \\
\vdots \\
log(x_i) \\
\vdots \\
log(x_n) \end{bmatrix} = \begin{bmatrix}log(\frac{P_1}{\alpha_1PA}) \\
\vdots \\
log(\frac{P_i}{\alpha_iPA}) \\
\vdots \\
log(\frac{P_n}{\alpha_nPA}) \end{bmatrix} 
\end{equation}

For the constrained case, the maximisation problem  

\begin{equation}
\begin{aligned}\max_{x_1,x_2,\dots,x_i} PA\prod_{i=1}^{n} x_i^{\alpha_i} - \sum_{i=1}^n P_i x_i \\
\textrm {s.t.} \sum_{i=1}^n P_i x_i = W
\end{aligned}
\end{equation}

is the same as
 
\begin{equation}
\begin{aligned}  \max_{x_1,x_2,\dots,x_i} PA\prod_{i=1}^{n} x_i^{\alpha_i}\\
\textrm {s.t.} \sum_{i=1}^n P_i x_i = W  \label{refeq} 
\end{aligned}
\end{equation}

The solution for which is 
\begin{equation}
    x_i^* = \frac{\alpha_i W}{P_i\sum_{i=1}^n \alpha_i}
\end{equation}
			
For the case of $\sum_{i=1}^{n}\alpha_i > 1$ however, no interior local maxima exists, and therefore the global maximum must exist at the boundary. We use the known result that for an explicitly quasiconcave function as an objective function and a convex set as the constrained domain, a critical point attained by constrained maximisation is a maxima and in this case that is the global maximum. \cite{chiang}

Observe that for any producer, its maximisation problem is exactly of the form we have been discussing. Hence, the following algorithm finds out the global maximum for any producer. 

If the sum of elasticities is less than or equal to one:
Find an unconstrained global maximising bundle 
If the cost of the bundle is less than the budget available:
Optimal bundle found 
Else if the cost of the bundle is more than the budget available:
Find the constrained maximising bundle. Optimal bundle found.
Else if sum of elasticities is more than one:
Find the constrained maximising bundle. Optimal bundle found.

In the case of consumers, the maximisation problem does not have any interior critical point, hence the maximum is found at the boundary. Hence the optimal bundle is found through constrained optimization, i.e, by exhausting the budget. 

Note that the optimization problem is the same as \ref{refeq} with $x_{n+1}, x_{n+2}, \alpha_{n+1}, \alpha_{n+2} = (\omega L_{yt} + V_{yt}), (T - L_{yt}),\beta, \gamma$ respectively. So the objective function remains of similar form, a new constraint equation however is added on top of the budget constraint equation.  
\begin{equation}
  x_{n+1} + \omega x_{n+2} = \omega_t T + V_{yt}  
\end{equation}

Hence the constraint equation is still a hyperplane, formed by the intersection of a hyperplane and a plane, and is a convex set. 
 
Therefore the consumer's optimal choice of their utility maximising bundle is simply the critical point of the constrained optimisation problem, and is given by the following. 
\begin{equation}
 q_{yx_it}^* = \frac{\alpha_i W_t}{P_i(\sum_{i=1}^n \alpha_i + \beta + \gamma)} 
 \end{equation}
 \begin{equation}
{L_{yt}^s}^* = \frac{\omega_t \beta T - \gamma V_t}{\omega_t(\beta + \gamma)}
\end{equation}
 
\subsection{Price and Wage Rate Adjustment}
Each period after trade has taken place in both the goods market and the labour market, producers produce and add it to their inventory stock. For any particular producer $y$, the demand faced at time $t$ is $Q_{yt}^d$. The price for period $ t+1 $ is set in the period $t$ and it is set by the adjustment rule:
\begin{equation}
 P_{y(t+1)} = P_{yt} + \sigma_y (Q_{y(t+1)}^d - Q_{y(t+1)}^s)   
\end{equation}	

Where $\sigma_y > 0 $ is the price adjustment factor or the rate of adjustment. However at time $t$, $Q_{y(t+1)}^d$ is not known, so the best guess for the producer is $Q_{yt}^d$. So the above equation becomes:
\begin{equation}
 P_{y(t+1)} = P_{yt} + \sigma_y (Q_{yt}^d - Q_{y(t+1)}^s)   
\end{equation}

If however the adjustment causes the price to be non-positive, then $\sigma_y$ is reduced by 10\% and then readjusted with the new value of $\sigma_y$, and the process is repeated until $P_{y(t+1)}>0$

By almost the exact same way, the wage rate is adjusted with two differences. Firstly, the wage rate being a global parameter, is adjusted by the economy. Secondly the labour supplied at time $t+1$ is not known at time $t$. Hence the wage rate is updated by the following rule. 
\begin{equation}
 \omega_{y(t+1)} = \omega_{yt} +  \Sigma(L_{t}^d - L_{t}^s)   
\end{equation}
Where $\Sigma$ is the wage adjustment factor.

\subsection{Detection of Equilibrium}
The prices for all the goods and services are constant over time iff the supply and demand are equal for all markets. We say that the economy has reached a general equilibrium. Consider the two cases, one where the marginal change of a dynamic variables with time is zero and the other when the \textit{net} marginal change of a dynamic variable with time is zero, we term both these cases as equilibrium. In other words, in all such cases where the dynamic variables show periodic predictable properties with its mean value having zero marginal change with time, we have considered them as equilibrium cases.

To this end, we have checked whether all the sequences of 100-period rolling average of prices converge to zero. Rolling average is taken to smooth out oscillations and a tolerance of $10^{-3}$ is taken for marginal changes. In the algorithm described below, the sequence ${p_{it}}$ is obtained by first taking the 100 period rolling average of the actual prices, then taking the absolute value of its first difference.

The value of $\epsilon$ in the following algorithm is $10^{-3}$. As the sequences are finite, it is ensured that at least the last 100 consecutive elements of the sequence are below the tolerance limit. Leaving these alterations necessary for a computational treatment, the rest of the procedure described is the formal definition of convergence in an algorithmic form.

The algorithm used to check for equilibrium in a model which ran for 1000 time periods is as follows. 

\begin{algorithm}
\caption{Algorithm for detecting equilibrium}\label{algo2}
\begin{algorithmic}[1]

\State Initialize an empty list $l$
\For{\textbf{each} $i$ \textbf{in} $producers$}
  \For{$t$ \textbf{in} $[500, 900)$}
    \If{$\forall x \in \{p_{it},p_{i(t+1)},\dots,p_{i1000}\}(x < \epsilon)$} 
            \State $l$\textbf{.push}(True)
            \State \textbf{break} 
        \EndIf
  \EndFor
  \State $l$\textbf{.push}(False) 
\EndFor
\If{$\forall \text{y} \in l$\text{(y==True)}}
    \State \textbf{return} $"Equilibrium"$
\Else 
    \State \textbf{return} $"Disequilibrium"$
\EndIf
 
\end{algorithmic}
\end{algorithm}

It must also be mentioned that the wage rate sequence is checked for equilibrium in a similar manner.

\section{Results}
In this section, we tabulate and illustrate the results obtained from running the ABM thus constructed. 10,000 simulations are run using parallel computing to obtain the results. The parameters are either chosen exactly or drawn randomly from a distribution as follows.

\begin{table}[h]
\caption{Exogenously given initial conditions}\label{tab1}%
\begin{tabular}{@{}ll|ll@{}}
\toprule
Parameter Name & Value/Distribution & Parameter Name & Value/Distribution\\
\midrule
No. of producers($P_n$)    & 10   & Initial producer wealth  & 1,000,000  \\
No. of consumers($C_n$)    & 80   & Initial consumer wealth  & 1,000  \\
Wage rate adjustment factor & 0.0005 & Price adjustment factor & 0.3 \\
Rate of technological change & 10 & Time period & 365 \\
Profit reinvestment ratio & 0.9 & Initial wage rate & 30 \\
Initial prices & Uniform(0,100) & Return to scale of producers & $|$Normal(0.9,0.6)$|$ \\
No. of shareholders of a firm \footnotemark[1]& Uniform(1,$C_n$) & Share in a firm(if shareholder)\footnotemark[2] & Uniform(0,1) \\
C-D function elasticities\footnotemark[3] & Uniform(0,1) & No. of providers\footnotemark[4] & Uniform(1,$P_n$) \\
\botrule
\end{tabular}
\footnotetext[1]{Shareholders sampled uniformly from the list of consumers.}
\footnotetext[2]{Renormalized such that sum of the shares of all the stakeholders of a company is 1.}
\footnotetext[3]{C-D: Cobb-Douglas. Renormalized such that sum of the elasticities is the return to scale.}
\footnotetext[4]{Providers sampled uniformly from the list of producers.}
\end{table}

Each of the individual simulation is run with the above parameters as exogenously given conditions. The values are sampled from the distribution and the set of all such sampled variables for a particular simulation is generated by the seeds $s_1$ and $s_2$. $s_1$ generates initial prices, return to scale of producers, no. of shareholders of a firm, share in a firm(if shareholders) and the shareholders of a firm. $s_2$ generates the Cobb-Douglas function elasticities, number of providers and the providers. In other words, $s_2$ generates the graph properties of the network economy and $s_1$ generates the rest of the initital conditions. 

The model is run for 1000 time periods if it is not halted by a termination condition. There are two termination conditions. One is when there remains only a single producer in the economy and the other is when the consumer wealth falls to zero. In both the cases, economic activities as envisioned in this model fails to take place or becomes meaningless in such conditions. Hence the model terminates in these cases. 

All the possible outcomes of a model run are therefore either that it reaches one of the termination conditions or it runs for 1000 time periods ultimately reaching equilibrium or staying in a perpetual disequilibrium. The results are tabulated in table \ref{tab2}. 

{\setlength\tabcolsep{3.5pt}
\begin{sidewaystable}
\caption{Outcomes of the simulation run}\label{tab2}
\begin{tabular*}{\textheight}{@{\extracolsep\fill}l  *{9}{p{1cm}}}
\toprule
Halt Type &  Count &  Total Wealth (Producer) &  Total Wealth (Consumer) &  Number of Shut Firms &  Gini Coefficient (Consumers) &  Gini Coefficient (Producers) &  Total Utility &    Wage Rate &  Leisure Proportion \\

\midrule
\textbf{ConsumerWealthZero} &    663 &              1.0080e+07 &              0.0000e+00 &            7.4510e+00 &                   7.7219e-01 &                   2.8095e-01 &     2.7666e+01 & 3.2220e+03 &          9.9827e+01 \\
    \textbf{Disequilibrium} &   7085 &              9.3838e+06 &              6.9625e+05 &            6.4837e+00 &                   5.2433e-01 &                   4.3397e-01 &     4.6654e+04 & 9.7409e+03 &          8.9598e+01 \\
       \textbf{Equilibrium} &    314 &              9.9130e+06 &              1.6704e+05 &            7.1338e+00 &                   3.7535e-01 &                   3.5480e-01 &     1.7370e+06 & 9.3168e+01 &          8.4041e+01 \\
\textbf{SingleProducerLeft} &   1938 &              8.6002e+06 &              1.4798e+06 &            7.9546e+00 &                   4.8176e-01 &                   3.6912e-01 &     1.2345e+08 & 2.8244e+04 &          9.4687e+01 \\

\botrule
\end{tabular*}
\footnotetext{Note: Except for count the other variables are the }
\end{sidewaystable}
}

We see that only 314 cases lead to equilibrium out of 10,000. Mostly the model stays in perpetual disequilibrium or halts due to a terminal condition. In all the four cases we see around 6 to 7 firms getting shut on average. In all the cases there exists inequality among both producers and consumers. This is noteworthy as we begin from a perfectly equal wealth endowment among all producers and all consumers. We also observe that equilibrium cases has the lowest average leisure proportion among all the other cases. It is also noteworthy that total utility is higher in equilibrium cases than the disequilibrium ones, however is the highest in the single producer case. 
Some illustrative examples are given for each of equilibrium and disequilibrium cases. The cases of single producer and consumer wealth going to zero has not been illustrated as the authors deemed these cases redundant. Curious readers however can inspect these cases themselves through the materials linked in the appendix.

\subsection{Equilibrium}

In this subsection we elaborate upon some of the typical equilibrium run cases observed. The first case to be illustrated has been generated by $s_1 = 78$ and $s_2 = 178$. 

The initial graph of the network economy is given as follows. 
\begin{figure}[H]
	\centering
	\includegraphics[width=\textwidth]{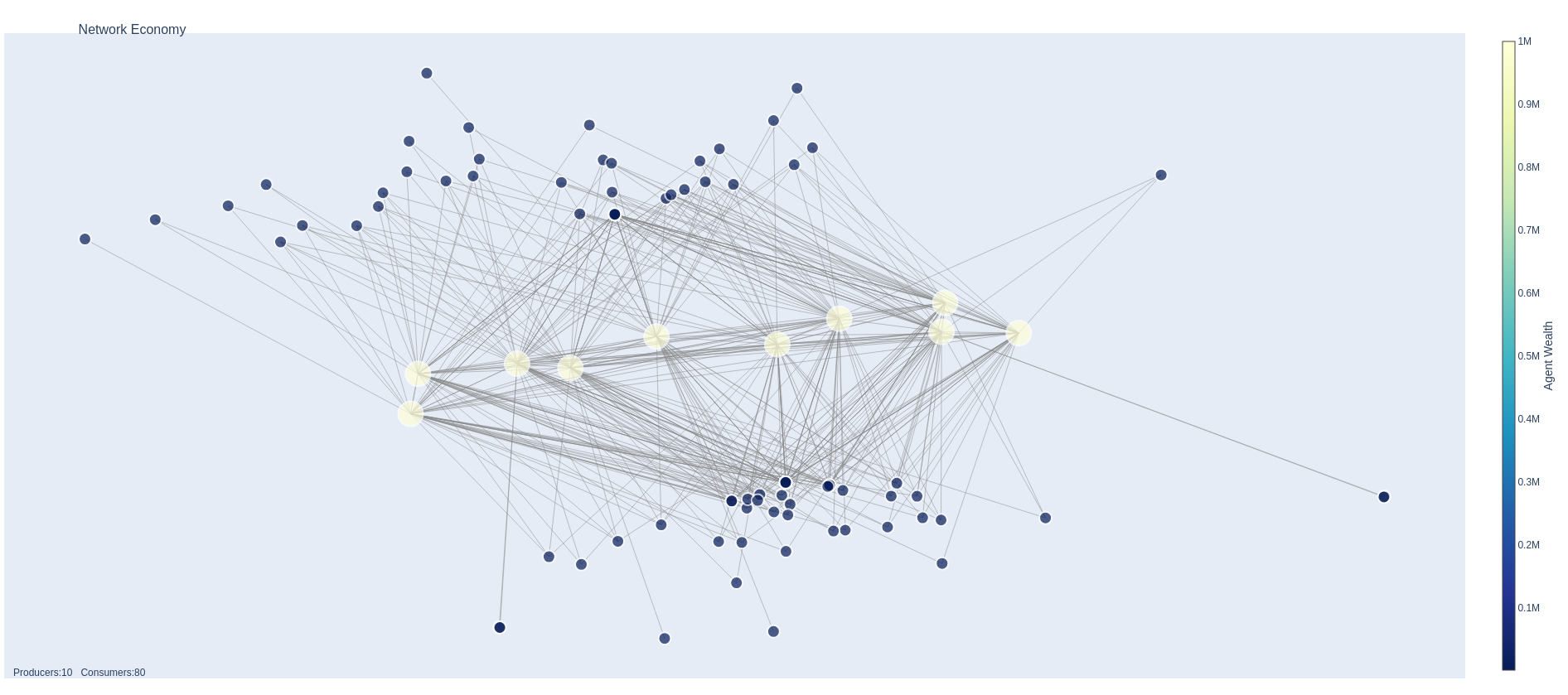}
	\caption{Model state with $s_1=78, s_2=178$ at $t = 0$}
	\label{fig00}
\end{figure}

This is the economy visualised as a network graph with the larger nodes representing the producers and the smaller ones the consumer. The colours of nodes denote the amount of wealth they have. The legend on the right helps get an idea of the value. 

After 1000 time periods only two producers remain. The resultant network graph is given in figure \ref{fig01}

\begin{figure}[H]
	\centering
	\includegraphics[width=\textwidth]{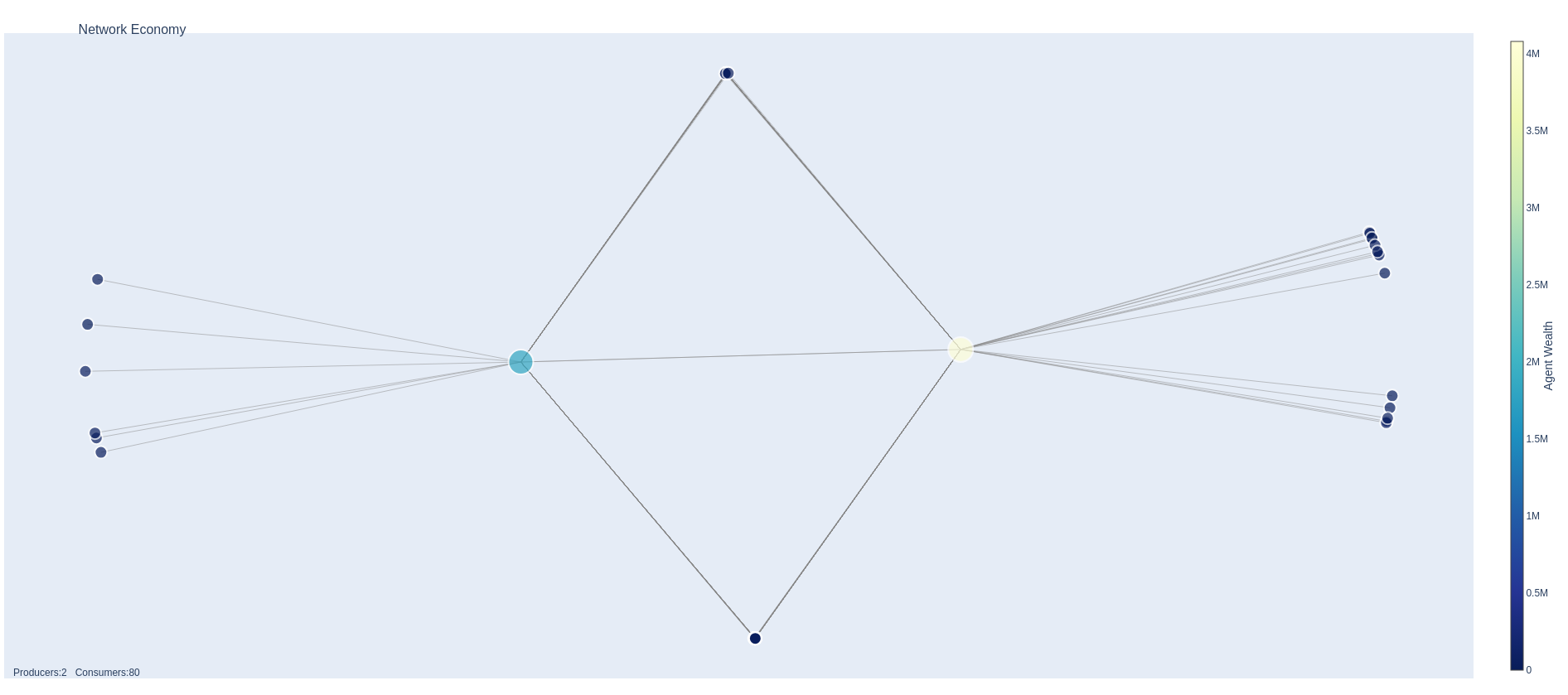}
	\caption{Model state at with $s_1=78, s_2=178$ $t = 1000$}
	\label{fig01}
\end{figure}

The evolution of various economic variables are illustrated in figure \ref{big image 1}.

\begin{figure}[H]
\begin{subfigure}{0.5\textwidth}
    \includegraphics[width=\linewidth]{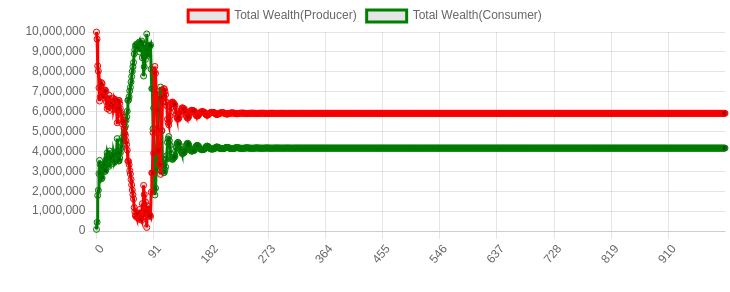}
    \caption{Evolution of wealth}
    \label{fig1}
\end{subfigure}
\begin{subfigure}{0.5\textwidth}
	\includegraphics[width=\linewidth]{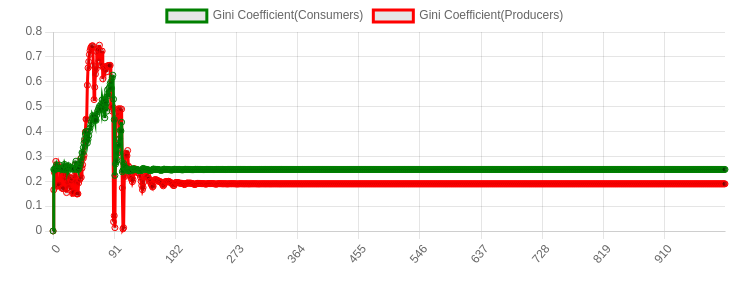}
	\caption{Evolution of gini coefficient}
	\label{fig1.5}
\end{subfigure}

\begin{subfigure}{0.5\textwidth}
    \includegraphics[width=\linewidth]{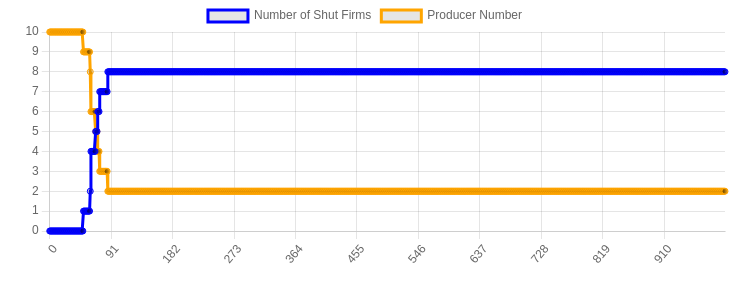}
    \caption{No. of producers/ shut firms}
    \label{fig2}
\end{subfigure}
\begin{subfigure}{0.5\textwidth}
    \includegraphics[width=\linewidth]{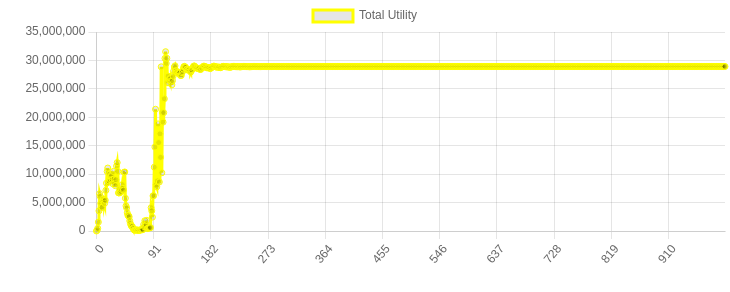}
    \caption{Evolution of total utility}
    \label{fig3}
\end{subfigure}
\begin{subfigure}{0.5\textwidth}
 	\includegraphics[width=\linewidth]{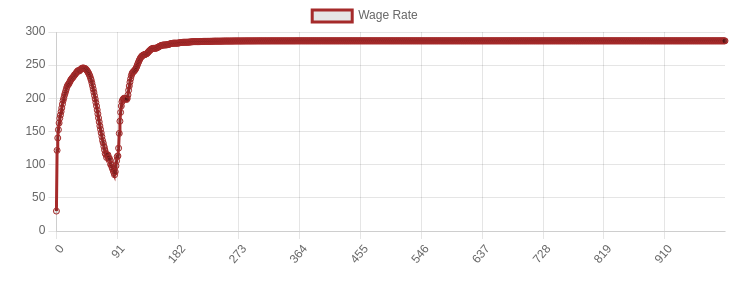}
	\caption{Evolution of total utility}
	\label{fig4}   
\end{subfigure}
\begin{subfigure}{0.5\textwidth}
	\includegraphics[width=\linewidth]{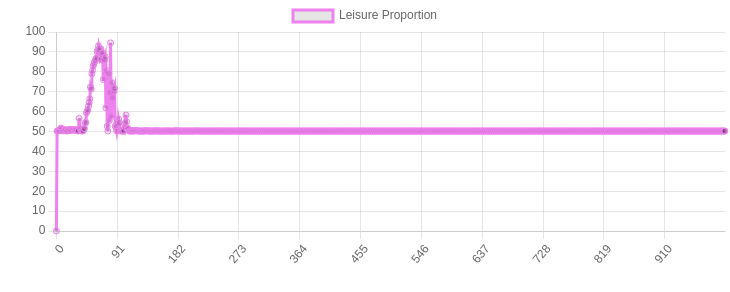}
	\caption{Evolution of aggregate leisure proportion}
	\label{fig5}    
\end{subfigure}
\begin{subfigure}{0.5\textwidth}
        \includegraphics[width=\linewidth]{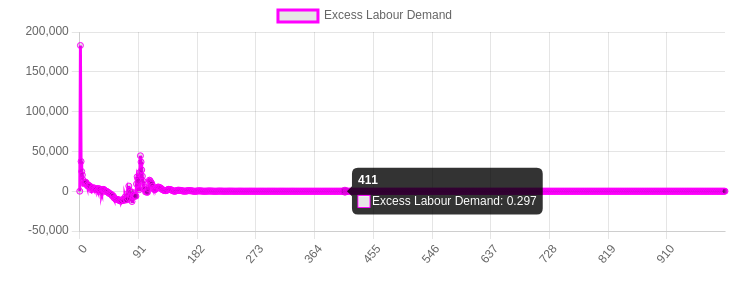}
	\caption{Evolution of excess labour demand}
	\label{fig6}
\end{subfigure}
\caption{Model with $s_1=78, s_2=178$}
\label{big image 1}
\end{figure}








We can see that all the economic variables converges to a certain value. Also note that the total utility and wage rate of the economy has gone up significantly from its initial value. Furthermore, leisure proportion is quite low, signifying high employment in the economy. The excess labour demand curve is also converging to zero hence there is no involuntary employment as well. 

It needs to be pointed out especially that the case just described is an ideal normatively desirable case. In fact in models with $s_1=30, s_2=11$ and $s_1=78,s_2=157$ we end up in equilibrium with 6 and 9 producers and while keeping all the desirable properties of the above one as well.

It must however be pointed out that not all equilibrium cases have these desirable properties. One such example is generated by $s_1=47,s_2=121$.

The network graphs of the economy is given in figure \ref{fig10} and \ref{fig11}. We can see that after 1000 time periods only two producers remain.

\begin{figure}[H]
	\centering
	\includegraphics[width=\textwidth]{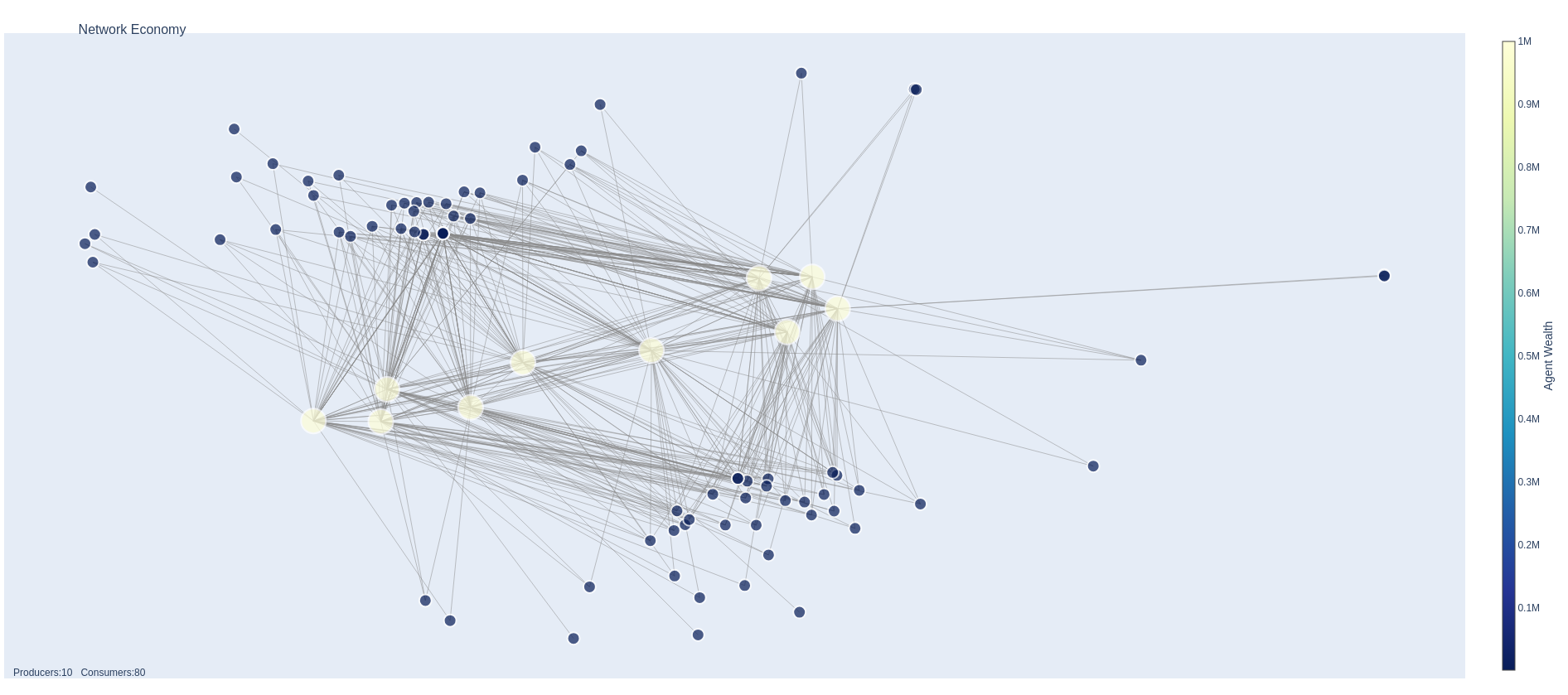}
	\caption{Model state with $s_1=47, s_2=121$ at $t = 0$}
	\label{fig10}
\end{figure}
\begin{figure}[H]
	\centering
	\includegraphics[width=\textwidth]{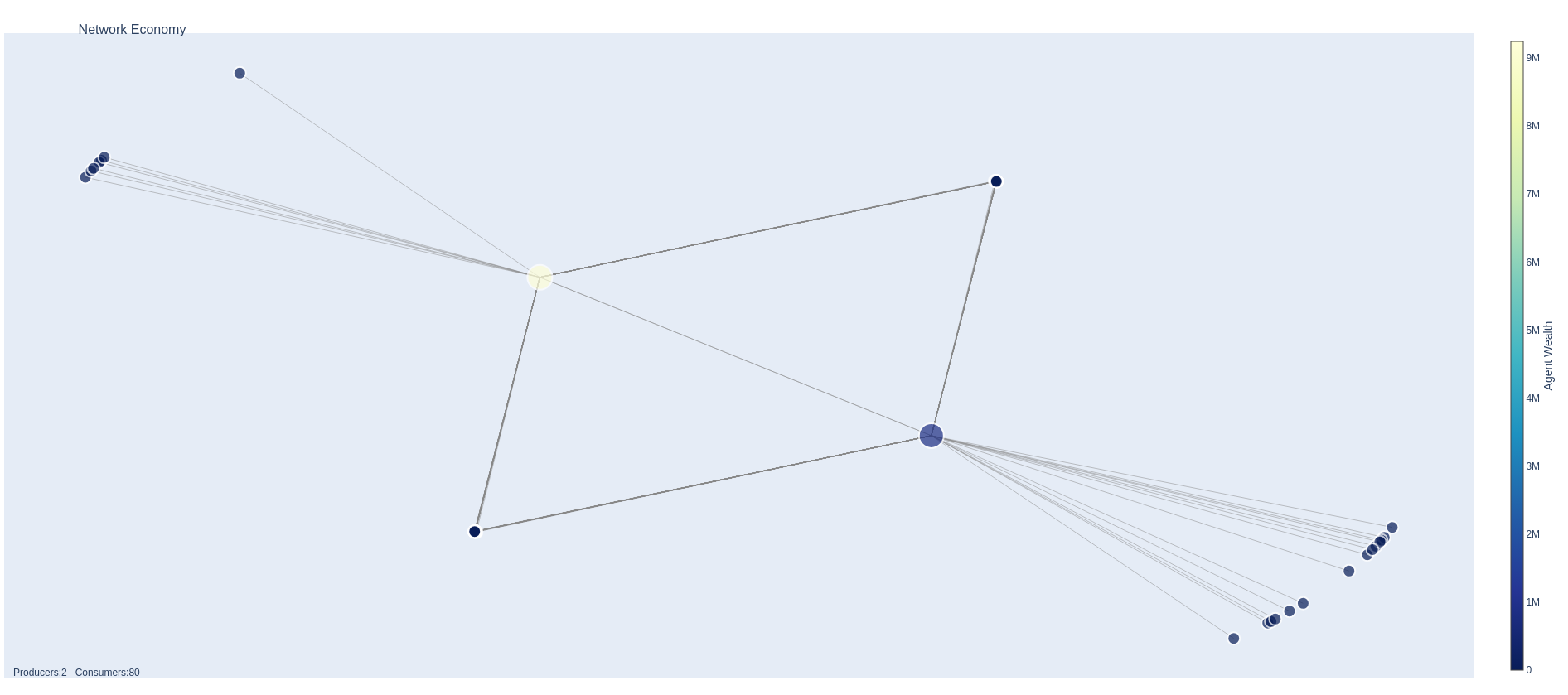}
	\caption{Model state with $s_1=47, s_2=121$ at $t = 1000$}
	\label{fig11}
\end{figure}

The economic variables evolve as in figure \ref{big image 2}

\begin{figure}[H]
\begin{subfigure}{0.5\textwidth}
    \includegraphics[width=\linewidth]{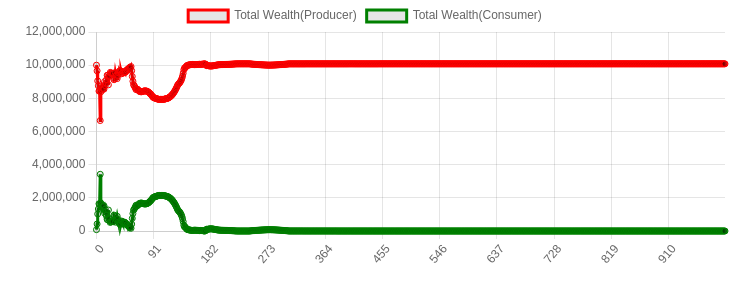}
    \caption{Evolution of wealth}
    \label{fig1.1}
\end{subfigure}
\begin{subfigure}{0.5\textwidth}
	\includegraphics[width=\linewidth]{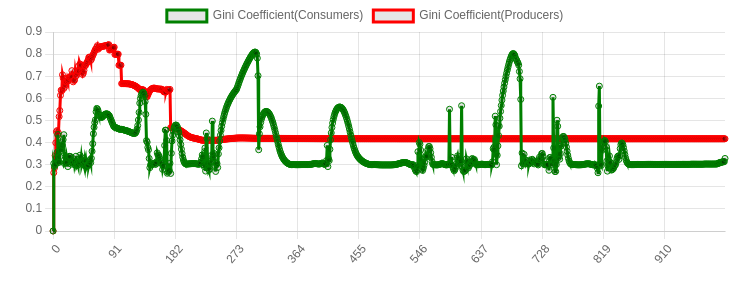}
	\caption{Evolution of gini coefficient}
	\label{fig1.2}
\end{subfigure}
\begin{subfigure}{0.5\textwidth}
    \includegraphics[width=\linewidth]{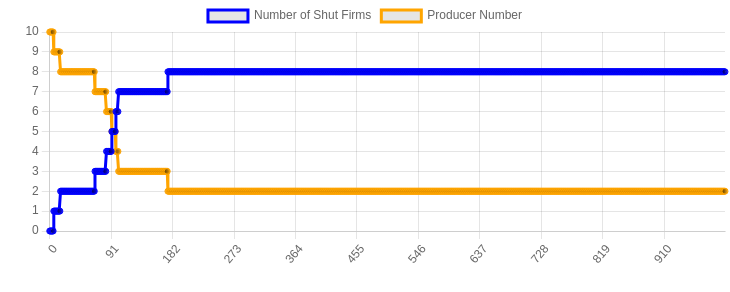}
    \caption{No. of producers/ shut firms}
    \label{fig1.3}
\end{subfigure}
\begin{subfigure}{0.5\textwidth}
    \includegraphics[width=\linewidth]{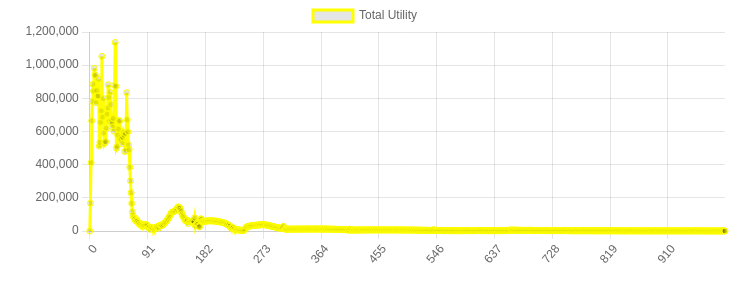}
    \caption{Evolution of total utility}
    \label{fig1.4}
\end{subfigure}
\begin{subfigure}{0.5\textwidth}
 	\includegraphics[width=\linewidth]{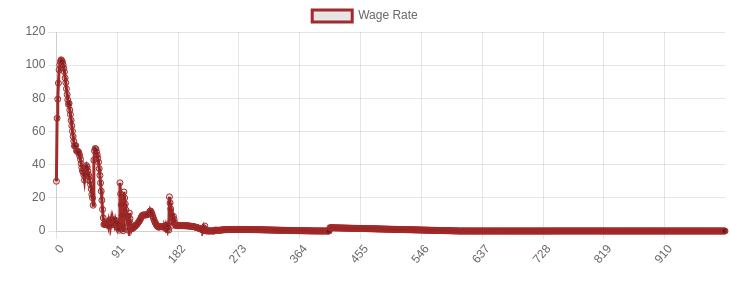}
	\caption{Evolution of total utility}
	\label{fig1.5}   
\end{subfigure}
\begin{subfigure}{0.5\textwidth}
	\includegraphics[width=\linewidth]{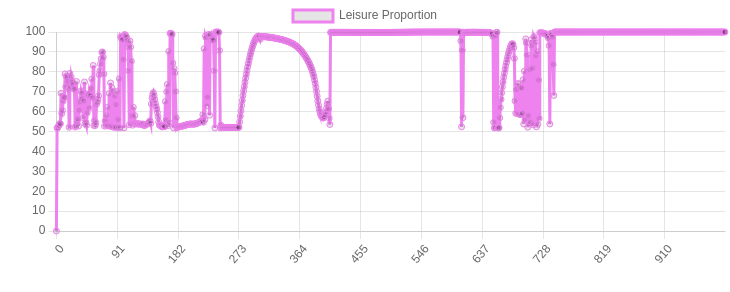}
	\caption{Evolution of aggregate leisure proportion}
	\label{fig1.6}    
\end{subfigure}
\begin{subfigure}{0.5\textwidth}
        \includegraphics[width=\linewidth]{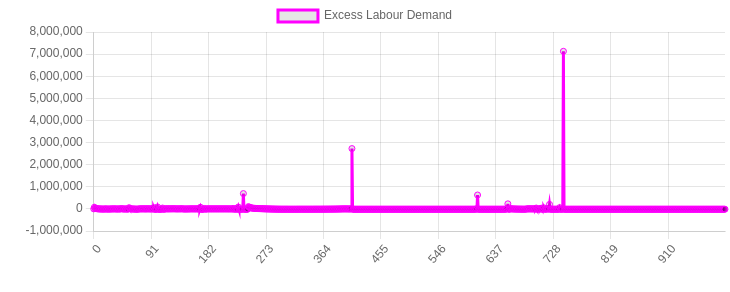}
	\caption{Evolution of excess labour demand}
	\label{fig1.7}
\end{subfigure}
\begin{subfigure}{0.5\textwidth}
        \includegraphics[width=\linewidth]{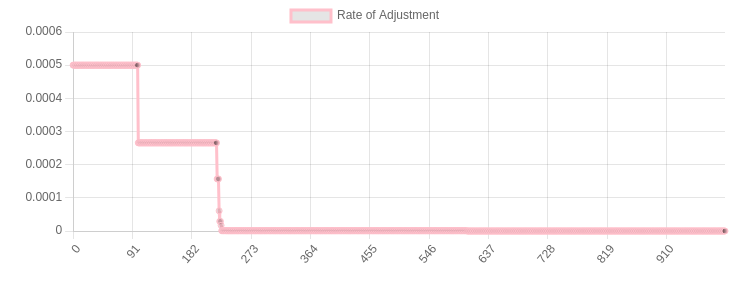}
	\caption{Evolution of rate of adjustment of wage}
	\label{fig1.8}
\end{subfigure}
\caption{Model with $s_1=47, s_2=121$}
\label{big image 2}
\end{figure}

Notice how in this case the total utility level has floored. Notice also that the wage rate and the utility function seem to move together. At time period  1000 aggregate leisure proportion is 99.999999 and the aggregate excess labour demand is 1.406763e+04 implying that there is heavy unemployment. However, this simulation run passes the convergence test as the rate of adjustment of wage has fallen almost to zero and for this reason the changes in the wage rate is almost negligible and similarly for all other prices.

This is made possible by the fact that even with non-zero amount of excess demand the price(wage) adjustment factor can fall to zero. The price adjustment factor is reduced to 0.9 of its previous value every time it tries to adjust to a negative price. As we see in figure \ref{fig1.7}, the high spikes in the excess labour demand forces the wage rate adjustment factor to plummet to near zero levels. However there is no mechanism for the adjustment factor to increase. By design, it is a weakly decreasing function of time. This is done to ensure that the adjustment factor does not over-correct the prices near possible equilibrium values. 

The modeler has to therefore ensure explicitly the excess demand is also satisfactorily low for a run marked as equilibrium to get the aforementioned desirable properties.

\subsection{Disequilibrium}

In this subsection we illustrate a typical model in disequilibrium. The seeds of the example model will be $s_1=25,s_2=112$. The model ends up with 3 producers at the end of 1000 time periods as shown by figure \ref{fig20} and \ref{fig21}.

\begin{figure}[h!]
	\centering
	\includegraphics[width=\textwidth]{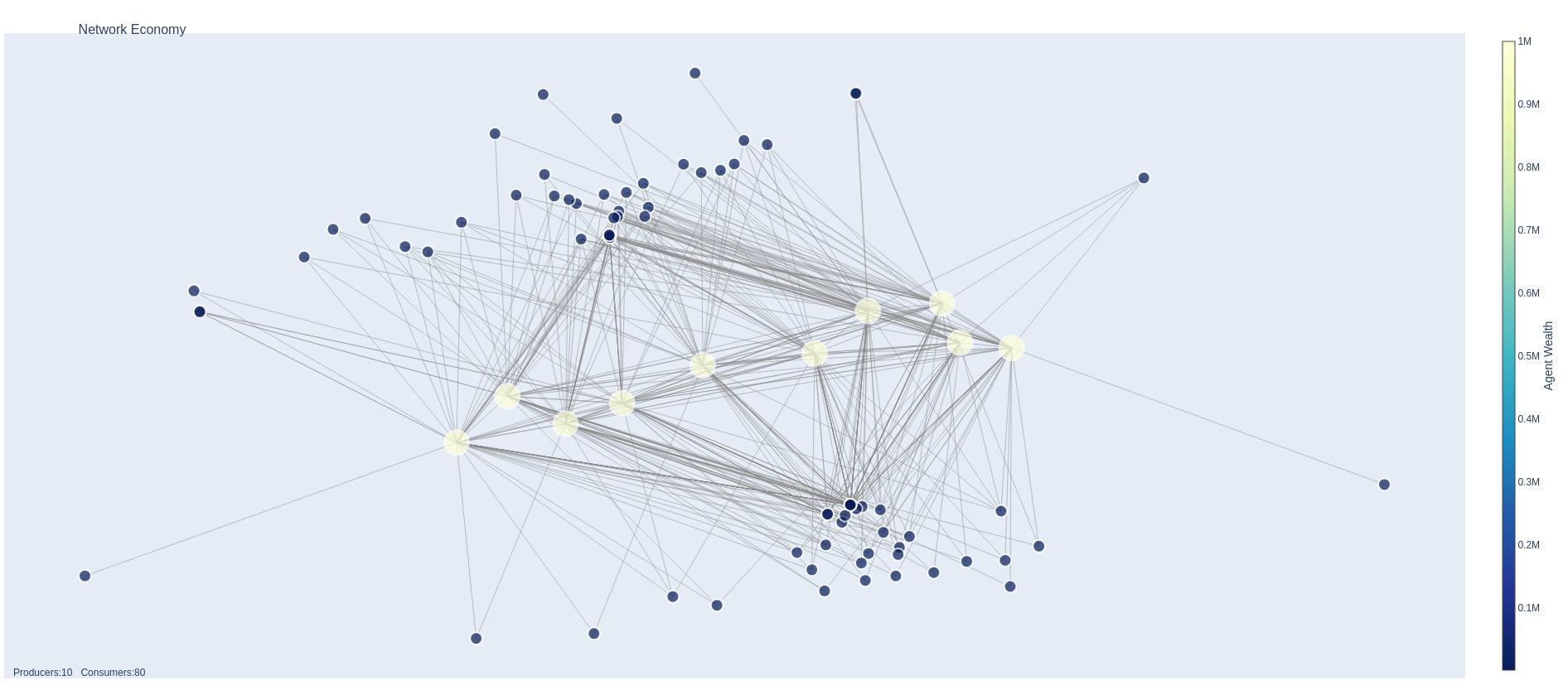}
	\caption{Model state with $s_1=25, s_2=112$ at $t = 0$}
	\label{fig20}
\end{figure}
\begin{figure}[h!]
	\centering
	\includegraphics[width=\textwidth]{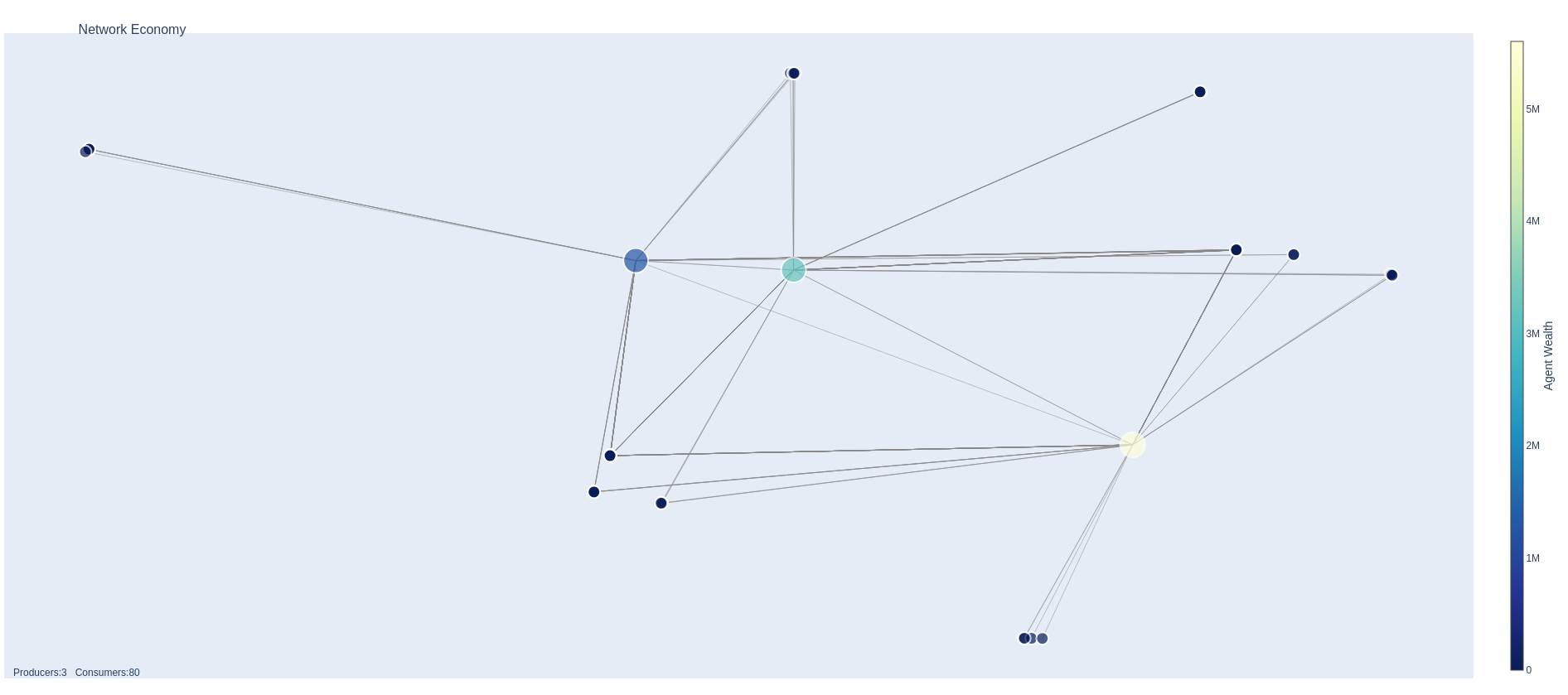}
	\caption{Model state with $s_1=47, s_2=112$ at $t = 1000$}
	\label{fig21}
\end{figure}

The evolution of the economic variables are shown in figure \ref{big image 3}.
\begin{figure}[H]
\begin{subfigure}{0.5\textwidth}
    \includegraphics[width=\linewidth]{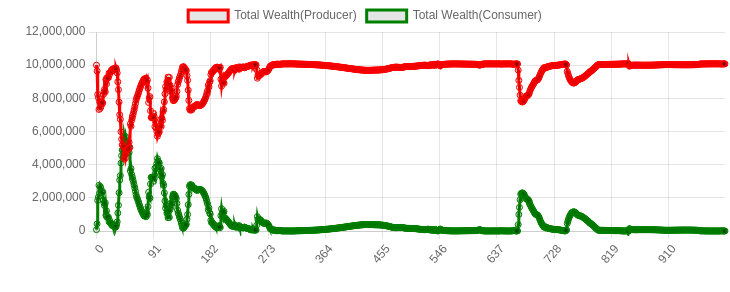}
    \caption{Evolution of wealth}
    \label{fig2.1}
\end{subfigure}
\begin{subfigure}{0.5\textwidth}
	\includegraphics[width=\linewidth]{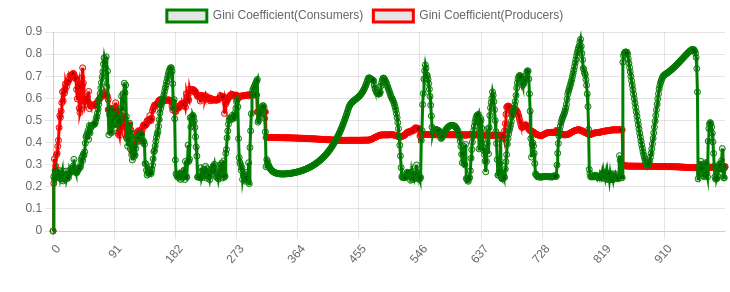}
	\caption{Evolution of gini coefficient}
	\label{fig2.2}
\end{subfigure}
\begin{subfigure}{0.5\textwidth}
    \includegraphics[width=\linewidth]{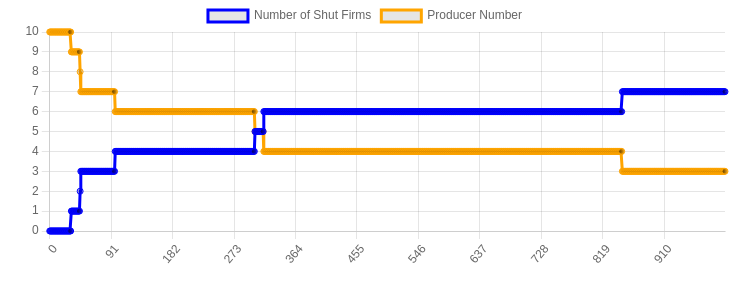}
    \caption{No. of producers/ shut firms}
    \label{fig2.3}
\end{subfigure}
\begin{subfigure}{0.5\textwidth}
    \includegraphics[width=\linewidth]{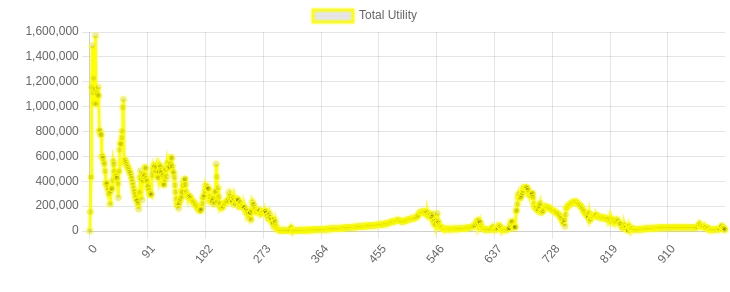}
    \caption{Evolution of total utility}
    \label{fig2.4}
\end{subfigure}
\begin{subfigure}{0.5\textwidth}
 	\includegraphics[width=\linewidth]{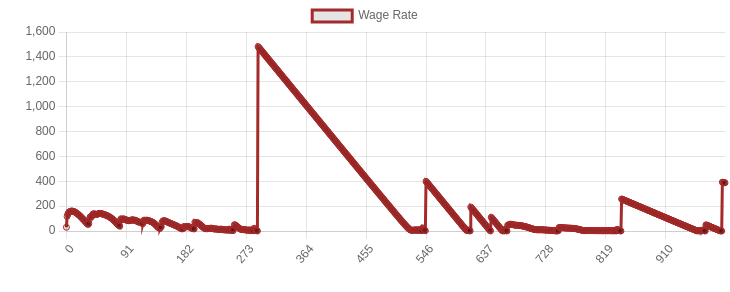}
	\caption{Evolution of total utility}
	\label{fig2.5}   
\end{subfigure}
\begin{subfigure}{0.5\textwidth}
	\includegraphics[width=\linewidth]{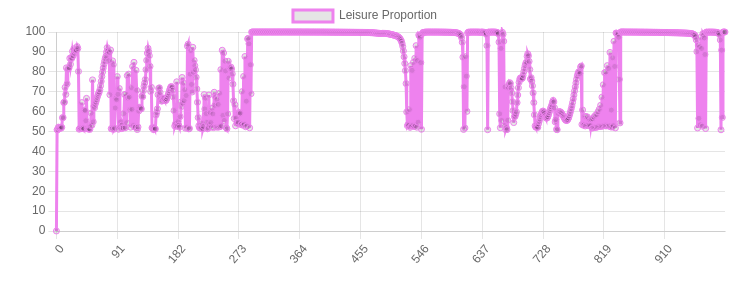}
	\caption{Evolution of aggregate leisure proportion}
	\label{fig2.6}    
\end{subfigure}
\begin{subfigure}{0.5\textwidth}
        \includegraphics[width=\linewidth]{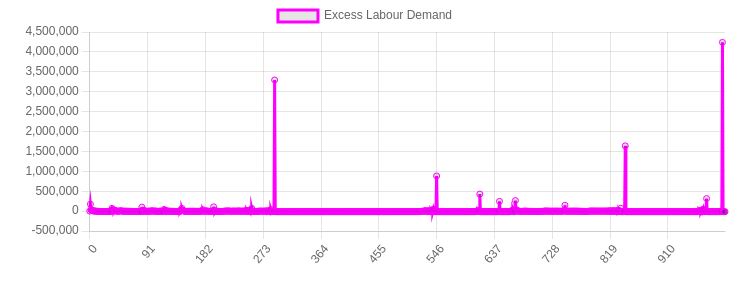}
	\caption{Evolution of excess labour demand}
	\label{fig2.7}
\end{subfigure}
\begin{subfigure}{0.5\textwidth}
        \includegraphics[width=\linewidth]{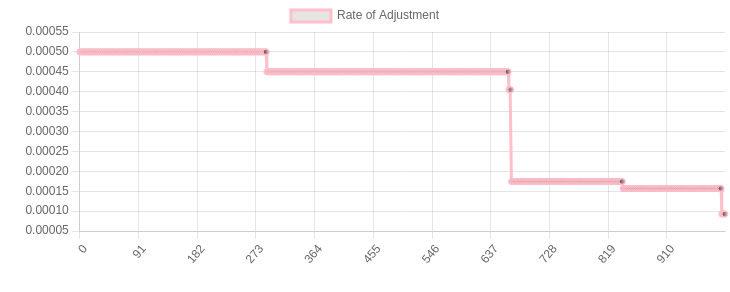}
	\caption{Evolution of rate of adjustment of wage}
	\label{fig2.8}
\end{subfigure}
\caption{Model with $s_1=25, s_2=112$}
\label{big image 3}
\end{figure}

We see that the economy is in perpetual disequilibrium. The wage rate curve is characterized by steep rises and then gradual reverting back to near zero levels. We also see that phases of high wage rates are accompanied by near 100 leisure proportion. This is because when the wage rates get too high, firms move to less labour intensive methods and hence there is no adequate demand for labour. It is also noteworthy that the utility curve keeps decreasing from its initial values.

The above simulation run paints rather a grim picture for the cases of disequilibrium. However, one interesting case, generated by seeds $s_1=78, s_2=122$, yields a situation where the economy is in disequilibrium yet has all the desirable properties of model seed $s_1=78, s_2=178$.

The model ends up with 2 producers as shown in figure 

\begin{figure}[h!]
	\centering
	\includegraphics[width=\textwidth]{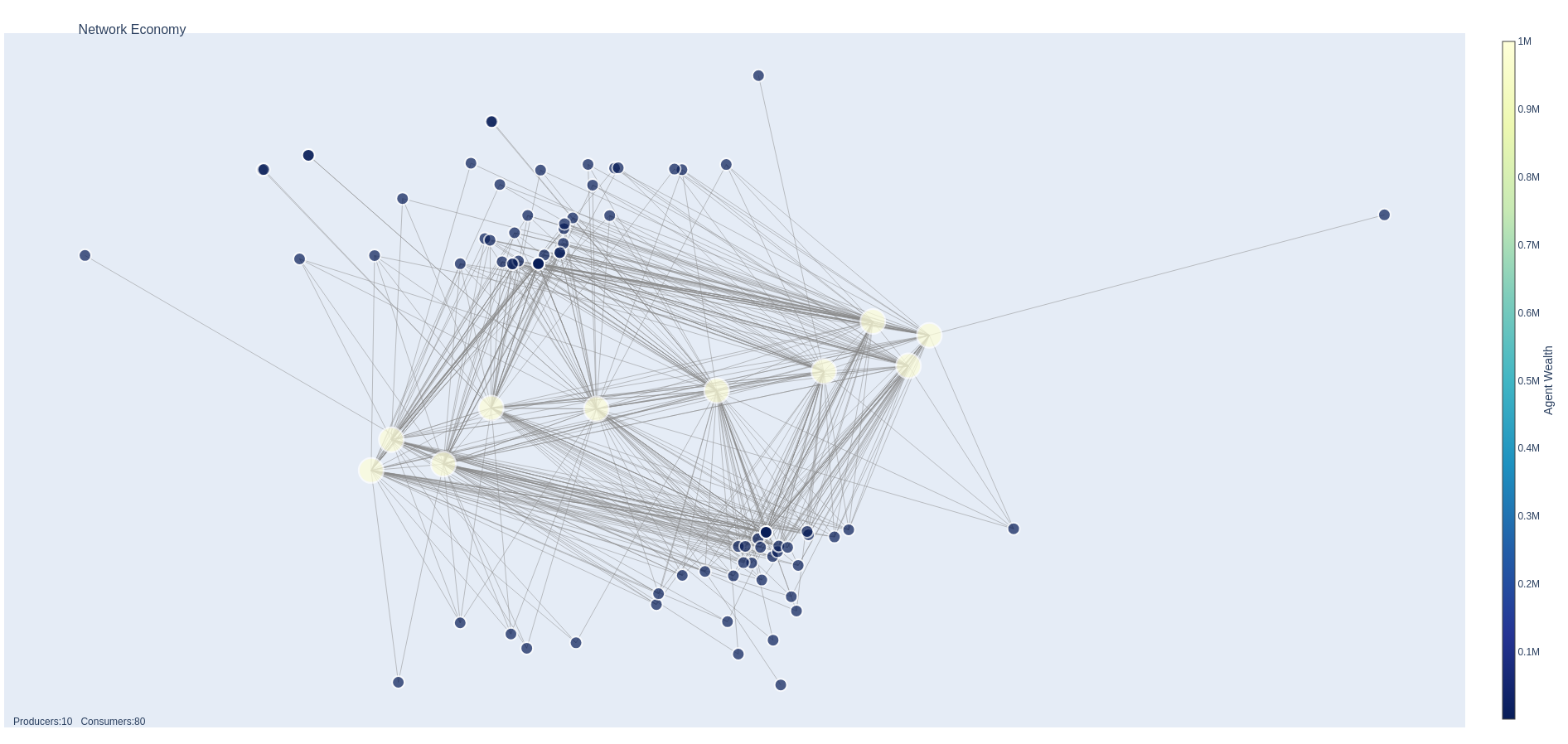}
	\caption{Model state with $s_1=78, s_2=122$ at $t = 0$}
	\label{fig30}
\end{figure}
\begin{figure}[h!]
	\centering
	\includegraphics[width=\textwidth]{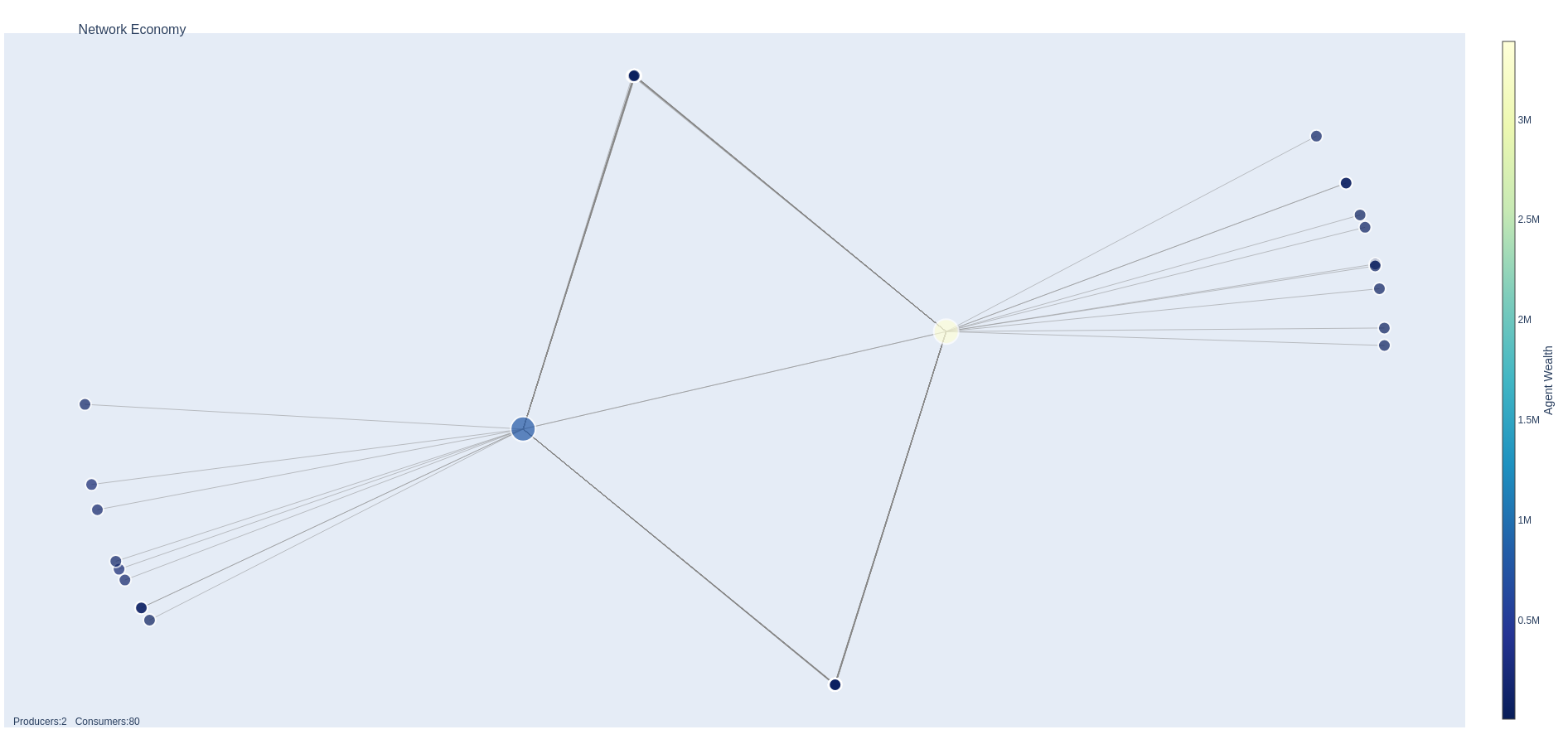}
	\caption{Model state with $s_1=78, s_2=122$ at $t = 1000$}
	\label{fig31}
\end{figure}

The evolution of the economic variables are shown in figure \ref{big image 4}.
\begin{figure}[H]
\begin{subfigure}{0.5\textwidth}
    \includegraphics[width=\linewidth]{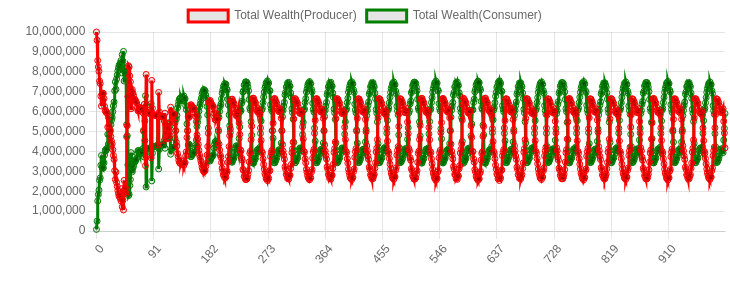}
    \caption{Evolution of wealth}
    \label{fig3.1}
\end{subfigure}
\begin{subfigure}{0.5\textwidth}
	\includegraphics[width=\linewidth]{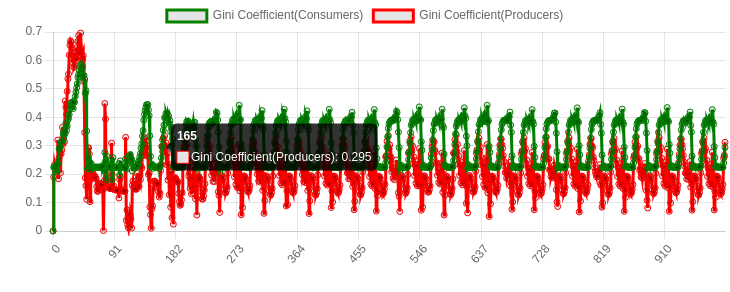}
	\caption{Evolution of gini coefficient}
	\label{fig3.2}
\end{subfigure}
\begin{subfigure}{0.5\textwidth}
    \includegraphics[width=\linewidth]{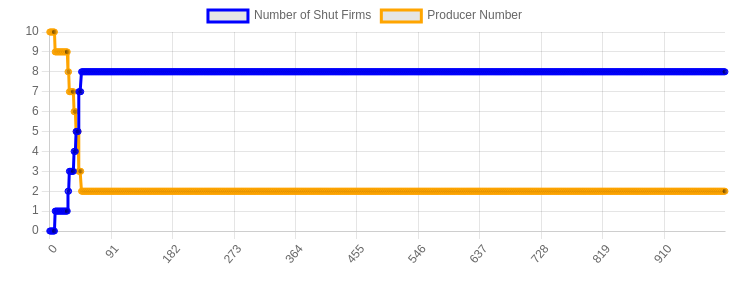}
    \caption{No. of producers/ shut firms}
    \label{fig3.3}
\end{subfigure}
\begin{subfigure}{0.5\textwidth}
    \includegraphics[width=\linewidth]{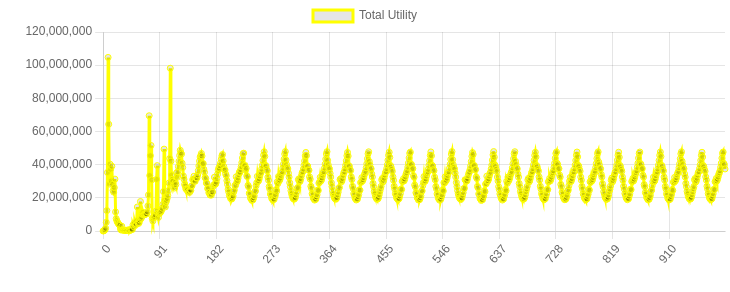}
    \caption{Evolution of total utility}
    \label{fig3.4}
\end{subfigure}
\begin{subfigure}{0.5\textwidth}
 	\includegraphics[width=\linewidth]{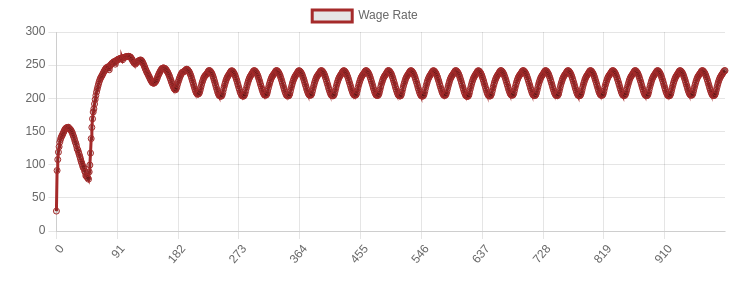}
	\caption{Evolution of total utility}
	\label{fig3.5}   
\end{subfigure}
\begin{subfigure}{0.5\textwidth}
	\includegraphics[width=\linewidth]{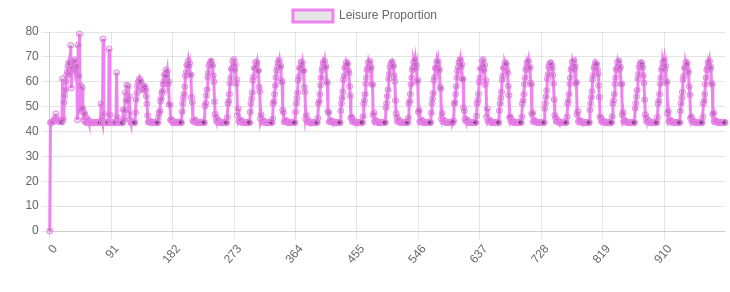}
	\caption{Evolution of aggregate leisure proportion}
	\label{fig3.6}    
\end{subfigure}
\begin{subfigure}{0.5\textwidth}
        \includegraphics[width=\linewidth]{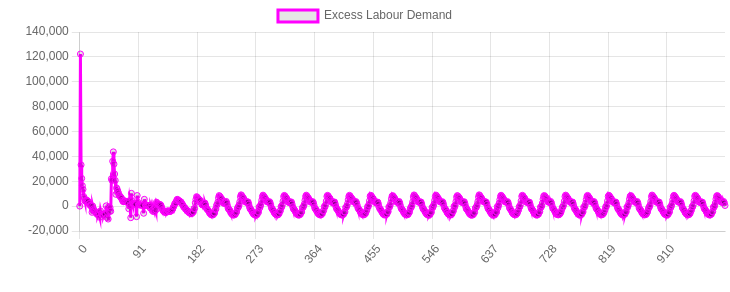}
	\caption{Evolution of excess labour demand}
	\label{fig3.7}
\end{subfigure}
\begin{subfigure}{0.5\textwidth}
        \includegraphics[width=\linewidth]{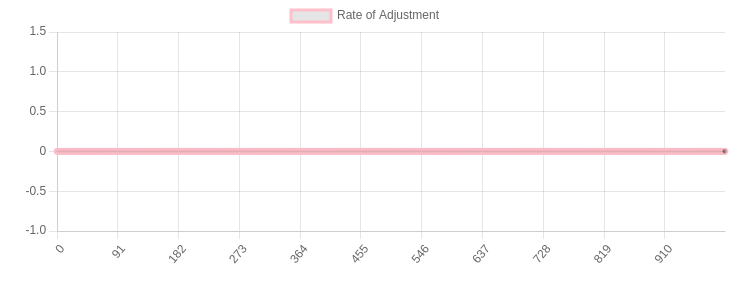}
	\caption{Evolution of rate of adjustment of wage}
	\label{fig3.8}
\end{subfigure}
\caption{Model with $s_1=78, s_2=122$}
\label{big image 4}
\end{figure}

We see that all the economic variables are oscillating with approximately constant amplitude and constant frequency. The utility level has gone up from the initial vales and even though its oscillating the leisure proportion is noticeably low. Excess labour demand is oscillating around zero level. 

We get to see all the desirable macroeconomic properties and the prices(wage) seem to have come to a predictable oscillating pattern. However, the amplitude of these oscillations are too big for the rolling average transformation to smooth them into a linear constant function. Hence these are classified as an equilibrium case.

\section{Conclusion}

We have created an agent-based model of the real economy with production and trade networks, relaxing some of the assumptions of the neoclassical theory, like perfect competition, capital aggregation, non-increasing returns to scale and homogeneous products and agents. 

The resultant model is a general-purpose open-source model which can be used for studying emergent patterns and performing counterfactual analysis.

10,000 simulations with 10 producers and 80 consumers were run with the model created and the following observations have been made. 

\begin{itemize}
    \item 314 runs of 10,000 lead to an equilibrium condition. This suggests that only small pockets of appropriate initial conditions can lead to an equilibrium condition.
    \item Most of the simulations stayed in a perpetual disequilibrium. However, some of the simulations resulted in consumer wealth falling to zero or only a single producer remaining in the whole economy. This suggests that from certain initial conditions a purely decentralized market structure with rational optimizing agents can also lead to economic disasters.
    \item In all the cases, starting from an equal endowment of wealth, an arbitrary assignment of trade and production relations, and elasticities leads to inequality in society.
    \item We see that in both equilibrium and disequilibrium, on average 6 to 7 firms were shut down.
    \item In some of the equilibrium cases, there existed high total utility, wage rate and no involuntary employment. These cases showed desirable normative properties a well functioning economy would agreeably be characterised by. 
    \item In some cases marked as equilibrium however, there existed non-zero excess demand, however the rates of adjustment were so low in these simulation cases that the prices and wage rate each converged to constant values. The aforementioned desirable properties were generally not present in these cases.
    \item In some cases marked as disequilibrium, the economy reached a sort of stable state where the economic parameters exhibited periodic oscillating behaviours with near-constant amplitude and frequency. However the amplitudes were too big to smooth the curves into a linear constant function by a rolling average transformation. Hence these cases were marked as disequilibrium. However, these cases also showed to some extent the desirable properties of some of the equilibrium cases.
\end{itemize}

\bmhead{Supplementary information}

The source code, accompanying files and the data generated by this model is open source and is available in a public git repository linked in the appendix. Economic researchers and developers are welcome to send in pull requests for fixing potential bugs, features and refactors. 

\section{Discussion}
In this section the authors discuss the scope for further development of this work. In terms of the modelling the economy, there exists a few features which can be included in future versions. These are listed as follows. 

\begin{itemize}
    \item The price(wage) adjustment rate can be modified so that it can try to offset for persistent non zero excess demand. Cyclic rate of adjustment is a possible alternative. \cite{smith2017cyclical}
    \item Agents can be modified to allow for a factor for labour efficiency, which would then make the labour output of each agent heterogeneous.
    \item Asset markets can be introduced.
    \item Institutions like government and central banks can be introduced.
    \item Finite or infinite horizon forward looking utility functions can be introduced to induce the agents to save their earnings.
    \item New entrants into the market can be allowed for and the population can be made to grow(or shrink).
    \item The functional form used is of Cobb-Douglas type and other functional forms can be introduced into the model.
    \item Instead of naive expectations, firms can have some sort of adaptive expectations about future parameters. Also instead of rational optimization, artificial intelligence algorithms can guide their decision making, to make it more "human-like".
\end{itemize}
  
The results obtained from these simulation runs can be further analysed. One immediate further development would be to find a characterization of the initial conditions which lead to equilibrium and another would be to find a characterization of those firms which remain and those which close down.

In terms of performance, the authors acknowledge that Python may not be the best programming language choice in future developments. However, the MESA module has been chosen for its ease of use and code readability. Agents.jl module in Julia can be an option for performance enhancement.\cite{Agents.jl}
 
\section*{Statements and Declarations}

\begin{itemize}
\item Funding: The authors declare that no funds, grants, or other support were received during the preparation of this manuscript.
\item Conflict of interest/Competing interests: The authors have no relevant financial or non-financial interests to disclose. 
\item Ethics approval: Not applicable 
\item Consent to participate: Not applicable
\item Consent for publication: Not applicable
\item Availability of data and materials: Available on Github.
\item Code availability: Available on Github. Link : \url{https://github.com/subhamonsey/ABMEconomy}
\item Authors' contributions: This is an extended work based on Subhamon Supantha's masters dissertation under the guidance of Prof. Naresh Kumar Sharma. All the main ideas and concepts have been formulated, debated and discussed by both the authors equally. The algorithms, proofs and the code has been developed by Subhamon Supantha. The first draft has been written by Subhamon Supantha and all authors commented on previous versions of the manuscript. All authors read and approved the final manuscript.
\end{itemize}
\section*{Appendix}\label{secA1}

Link to the data, supplementary files and source code: \url{https://github.com/subhamonsey/ABMEconomy}

\bibliography{sn-bibliography} 


\begin{thebibliography}{10}
\ifx \bisbn   \undefined \def \bisbn  #1{ISBN #1}\fi
\ifx \binits  \undefined \def \binits#1{#1}\fi
\ifx \bauthor  \undefined \def \bauthor#1{#1}\fi
\ifx \batitle  \undefined \def \batitle#1{#1}\fi
\ifx \bjtitle  \undefined \def \bjtitle#1{#1}\fi
\ifx \bvolume  \undefined \def \bvolume#1{\textbf{#1}}\fi
\ifx \byear  \undefined \def \byear#1{#1}\fi
\ifx \bissue  \undefined \def \bissue#1{#1}\fi
\ifx \bfpage  \undefined \def \bfpage#1{#1}\fi
\ifx \blpage  \undefined \def \blpage #1{#1}\fi
\ifx \burl  \undefined \def \burl#1{\textsf{#1}}\fi
\ifx \doiurl  \undefined \def \doiurl#1{\url{https://doi.org/#1}}\fi
\ifx \betal  \undefined \def \betal{\textit{et al.}}\fi
\ifx \binstitute  \undefined \def \binstitute#1{#1}\fi
\ifx \binstitutionaled  \undefined \def \binstitutionaled#1{#1}\fi
\ifx \bctitle  \undefined \def \bctitle#1{#1}\fi
\ifx \beditor  \undefined \def \beditor#1{#1}\fi
\ifx \bpublisher  \undefined \def \bpublisher#1{#1}\fi
\ifx \bbtitle  \undefined \def \bbtitle#1{#1}\fi
\ifx \bedition  \undefined \def \bedition#1{#1}\fi
\ifx \bseriesno  \undefined \def \bseriesno#1{#1}\fi
\ifx \blocation  \undefined \def \blocation#1{#1}\fi
\ifx \bsertitle  \undefined \def \bsertitle#1{#1}\fi
\ifx \bsnm \undefined \def \bsnm#1{#1}\fi
\ifx \bsuffix \undefined \def \bsuffix#1{#1}\fi
\ifx \bparticle \undefined \def \bparticle#1{#1}\fi
\ifx \barticle \undefined \def \barticle#1{#1}\fi
\bibcommenthead
\ifx \bconfdate \undefined \def \bconfdate #1{#1}\fi
\ifx \botherref \undefined \def \botherref #1{#1}\fi
\ifx \url \undefined \def \url#1{\textsf{#1}}\fi
\ifx \bchapter \undefined \def \bchapter#1{#1}\fi
\ifx \bbook \undefined \def \bbook#1{#1}\fi
\ifx \bcomment \undefined \def \bcomment#1{#1}\fi
\ifx \oauthor \undefined \def \oauthor#1{#1}\fi
\ifx \citeauthoryear \undefined \def \citeauthoryear#1{#1}\fi
\ifx \endbibitem  \undefined \def \endbibitem {}\fi
\ifx \bconflocation  \undefined \def \bconflocation#1{#1}\fi
\ifx \arxivurl  \undefined \def \arxivurl#1{\textsf{#1}}\fi
\csname PreBibitemsHook\endcsname

\bibitem[\protect\citeauthoryear{Arrow and Debreu}{1954}]{arrowdebreu}
\begin{barticle}
\bauthor{\bsnm{Arrow}, \binits{K.J.}},
\bauthor{\bsnm{Debreu}, \binits{G.}}:
\batitle{Existence of an equilibrium for a competitive economy}.
\bjtitle{Econometrica}
\bvolume{22}(\bissue{3}),
\bfpage{265}--\blpage{290}
(\byear{1954}).
Accessed 2022-12-14
\end{barticle}
\endbibitem

\bibitem[\protect\citeauthoryear{Ackerman}{2002}]{ackerman}
\begin{barticle}
\bauthor{\bsnm{Ackerman}, \binits{F.}}:
\batitle{Still dead after all these years: interpreting the failure of general
  equilibrium theory}.
\bjtitle{Journal of Economic Methodology}
\bvolume{9},
\bfpage{119}--\blpage{139}
(\byear{2002})
\end{barticle}
\endbibitem

\bibitem[\protect\citeauthoryear{Robinson}{1953}]{Joan}
\begin{barticle}
\bauthor{\bsnm{Robinson}, \binits{J.}}:
\batitle{The production function and the theory of capital}.
\bjtitle{The Review of Economic Studies}
\bvolume{21}(\bissue{2}),
\bfpage{81}--\blpage{106}
(\byear{1953}).
Accessed 2023-07-11
\end{barticle}
\endbibitem

\bibitem[\protect\citeauthoryear{Evans and Honkapohja}{2001}]{Expectations}
\begin{bchapter}
\bauthor{\bsnm{Evans}, \binits{G.W.}},
\bauthor{\bsnm{Honkapohja}, \binits{S.}}:
\bctitle{Expectations, economics of}.
In: \beditor{\bsnm{Smelser}, \binits{N.J.}},
\beditor{\bsnm{Baltes}, \binits{P.B.}} (eds.)
\bbtitle{International Encyclopedia of the Social $\&$ Behavioral Sciences},
pp. \bfpage{5060}--\blpage{5067}.
\bpublisher{Pergamon},
\blocation{Oxford}
(\byear{2001}).
\doiurl{10.1016/B0-08-043076-7/02245-2} .
\burl{https://www.sciencedirect.com/science/article/pii/B0080430767022452}
\end{bchapter}
\endbibitem

\bibitem[\protect\citeauthoryear{Tesfatsion}{2006}]{tesfatsion2006agent}
\begin{barticle}
\bauthor{\bsnm{Tesfatsion}, \binits{L.}}:
\batitle{Agent-based computational economics: A constructive approach to
  economic theory}.
\bjtitle{Handbook of computational economics}
\bvolume{2},
\bfpage{831}--\blpage{880}
(\byear{2006})
\end{barticle}
\endbibitem

\bibitem[\protect\citeauthoryear{Arthur}{2006}]{arthur2006out}
\begin{barticle}
\bauthor{\bsnm{Arthur}, \binits{W.B.}}:
\batitle{Out-of-equilibrium economics and agent-based modeling}.
\bjtitle{Handbook of computational economics}
\bvolume{2},
\bfpage{1551}--\blpage{1564}
(\byear{2006})
\end{barticle}
\endbibitem

\bibitem[\protect\citeauthoryear{Masad and Kazil}{2015}]{mesa}
\begin{bchapter}
\bauthor{\bsnm{Masad}, \binits{D.}},
\bauthor{\bsnm{Kazil}, \binits{J.}}:
\bctitle{Mesa: an agent-based modeling framework}.
In: \bbtitle{14th Python in Science Conference},
vol. \bseriesno{2015},
pp. \bfpage{53}--\blpage{60}
(\byear{2015}).
\bcomment{Citeseer}
\end{bchapter}
\endbibitem

\bibitem[\protect\citeauthoryear{Chiang}{2005 - 2005}]{chiang}
\begin{bbook}
\bauthor{\bsnm{Chiang}, \binits{A.C.}}:
\bbtitle{Fundamental Methods of Mathematical Economics / Alpha C. Chiang, Kevin
  Wainwright.},
\bedition{Fourth edition.} edn.
\bpublisher{McGraw-Hill/Irwin},
\blocation{Boston, Mass}
(\byear{2005 - 2005})
\end{bbook}
\endbibitem

\bibitem[\protect\citeauthoryear{Smith}{2017}]{smith2017cyclical}
\begin{botherref}
\oauthor{\bsnm{Smith}, \binits{L.N.}}:
Cyclical Learning Rates for Training Neural Networks
(2017)
\end{botherref}
\endbibitem

\bibitem[\protect\citeauthoryear{Datseris et~al.}{2022}]{Agents.jl}
\begin{barticle}
\bauthor{\bsnm{Datseris}, \binits{G.}},
\bauthor{\bsnm{Vahdati}, \binits{A.R.}},
\bauthor{\bsnm{DuBois}, \binits{T.C.}}:
\batitle{Agents.jl: a performant and feature-full agent-based modeling software
  of minimal code complexity}.
\bjtitle{{SIMULATION}}
\bvolume{0}(\bissue{0}),
\bfpage{003754972110688}
(\byear{2022})
\doiurl{10.1177/00375497211068820}
\end{barticle}
\endbibitem

\end{thebibliography}
	

\end{document}